\journal{Mathematical Biosciences}
\newtheorem{theorem}{Theorem}[section]
\newtheorem{proposition}[theorem]{Proposition}
\newtheorem{lemma}[theorem]{Lemma}
\newtheorem{conjecture}[theorem]{Conjecture}
\def\S{\mathcal{S}}
\newcommand{\RR}{{\mathbb R}}
 \newcommand{\PP}{\mbox{$\mathbb P$}}
\newcommand{\EE}{\mbox{$\mathbb E$}}
\begin{document}
\begin{frontmatter}
\title{Inferring ancestral sequences in taxon-rich phylogenies}

\author{Olivier Gascuel$^1$}
\author{Mike Steel$^2$}
\address{$^1$M{\'e}thodes et Algorithmes pour la Bioinformatique,
LIRMM, CNRS, Montpellier Universit{\'e} 161 rue Ada,  34392  Montpellier, FRANCE
Tel: +33 (0) 4 67 41 85 47; Fax: +33 (0) 4 67 41 85 00; Email:  gascuel@lirmm.fr}

\address{$^2$Corresponding author\\Allan Wilson Centre for Molecular Ecology and Evolution
University of Canterbury, Christchurch, NEW ZEALAND
Tel:  +64 3 3667001;   Fax: +64-3-3642587; Email: m.steel@math.canterbury.ac.nz}

\begin{abstract}
Statistical consistency in phylogenetics has traditionally referred to the accuracy of estimating phylogenetic parameters for a fixed number of species as we increase the number of characters. However, as sequences are often of fixed length (e.g. for a gene) although we are often able to  sample more taxa, it is useful to consider  a dual type of statistical consistency where we increase the number of species, rather than characters.  This raises some basic questions: what can we learn about the evolutionary process as we increase the number of species?  In particular, does having more species allow us to infer the ancestral state of characters accurately?  This question is particularly relevant when sequence site evolution varies in a complex way from character to character, as well as for reconstructing ancestral sequences. In this  paper, we assemble a collection of results to analyse various approaches for inferring ancestral information with increasing accuracy as the number of taxa increases.
\end{abstract}
\date{Mar. 24, 2010}
\begin{keyword}
ancestral sequence, Markov process, parsimony
\end{keyword}

\end{frontmatter}
%\newpage

\section{Introduction}

As Elliott Sober discussed two decades ago \cite{sob}, there is a fundamental asymmetry between reconstructing a past state from a present observation, and predicting its future state. Moreover, this holds even when the state evolves according to a time-reversible process (processes which, when they are in equilibrium, behave the same whether run forward or backward in time).  For instance, consider any continuous Markov process on two states, with arbitrary transition rates (generally unequal) between the two states.  If we observe the state of the process at the present time $t$, then the `best' estimate of the initial state at time $0$ is always the present state, but the `best' estimate of its state at some future time $t'>t$ depends on the actual transition rates (which may be unknown) \cite{sob}.

When we move beyond two states in a Markov process, the current state is no longer guaranteed to always be the `best' estimate of the ancestral state,  even for reversible processes, as we describe below. Ancestral state estimation assumes a further dimension when we move from the linear evolution of a state through time to the bifurcating evolution of states in a tree that results in their observed values at the leaves.  The presence of many leaves helps us to estimate the ancestral state more accurately, but these leaves do not provide independent information
about the root state due to correlations arising from the partial overlap of the paths in the tree as one moves from the root to the leaves.
The mathematical, statistical and computational aspects of ancestral state estimation on a tree  have been explored by a number of authors (e.g. \cite{li, mad, mos, pag, sal, zha}) and the inference of ancestral states is an important question in biology \cite{lib}.

Our interest  here is in site-specific models. These are especially relevant with proteins, where each site has specific biochemical constraints (e.g. small and hydrophobic, aromatic, helix-former, etc). As we are interested in
site-specific models, the details of the substitution model are mostly unknown. For example, the relative or absolute branch length may not be known exactly, though we may have some upper bound on them.    Also, the equilibrium
frequencies at the site may not be known.  This is the case in the CAT model for proteins (\cite{lar}; see also \cite{kos}). This model is a mixture of F81-like models, where each site follows  a Poisson model with specific character frequencies defined by the biochemical constraints acting on that site. However, we shall see that dealing with unknown equilibrium frequencies imposes strong limitations when the aim is to estimate ancestral characters, especially when the branch lengths are unknown. Thus, we will also envisage special cases where equilibrium frequencies are known or even all identical.

In most cases (e.g. when the branch lengths are unknown), we are thus unable to use standard likelihood calculations based on the pruning algorithm to compute the most likely character at the tree root. Thus, we will discuss and study simple decision rules to predict the state at the tree root. Parsimony is an example of such a rule, where the branch lengths are useless. Another example is the majority rule that involves selecting the state that is most frequent at the tree leaves to estimate the root state.  For models in which the equilibrium frequencies are not uniform across states, more complex inference rules are required.
We shall see that under suitable assumptions on the tree topology and branch lengths and/or on the model, these simple rules are statistically consistent as the taxon sampling density becomes sufficiently large.

We treat four general cases, each depending on the properties of the model. We start with the simplest (symmetric) model, then consider two overlapping generalizations (`monotone' and `conservative') and finally we deal with the general model, for which stronger assumptions on the tree are required.

\subsection{Preliminaries}

Consider a rooted phylogenetic tree $T$ (possibly non-binary) with $n$ leaves and a set $\S$ of possible states that each vertex can be in.  For a single-site assignment of states at the leaves of $T$,  assume that the assignment has evolved under a GTR (general time-reversible) model from a particular character state $s_0$ at the root, with a normalized rate matrix $Q = \Pi S$ (where $\Pi = {\rm diag}(\pi)$ contains the equilibrium frequencies, and $S$ is a symmetric matrix of `exchangeabilities').  The process acts on each edge $e$ according to some associated branch length $l_e$.  A more general version of this question is when a single site is replaced by a (possibly short) sequence (of length $k$).
Assuming independent site evolution, the problem of ancestral state estimation remains the same (i.e. each site is solved independently).

We assume that $T$ and $S$ (and perhaps $\pi$) are given, and, in addition, we may either know $l_e$ or have some bounds on them (e.g. the sum of the lengths from the root to any tip is, at most, some given value of $t$). We would like to use this input to estimate the ancestral state $s_0 \in \S$ at the root of the tree.  The ability to estimate $s_0$ accurately depends on a tradeoff between what we know about the underlying parameters  (e.g. the site rate parameter $\mu$, the branch lengths $l_e$, and the properties of $Q$ such as the equilibrium distribution $\pi$) and how `well behaved' the underlying Markov process is.

In particular, we seek a method $M$ of estimation that is statistically consistent in the following sense:  As $n$ becomes large, and given increasingly tight constraints on the tree, its branch lengths or the model, we want the accuracy of $M$ (the probability that $M$ reconstructs the ancestral state correctly) to converge to 1.

A natural choice of such a method, when $Q$ is completely specified (including the equilibrium distribution $\pi$) and the branch lengths ($l_e$) are also known exactly, is to take the maximum posterior probability (MPP) ancestral state (this selects the state with the largest posterior probability;  the MPP method can be shown to confer the largest expected reconstruction probability  amongst all methods). Note that for a symmetric model the MPP estimate of the root state is the same as the maximum likelihood (ML) estimate, but in general the two approaches differ (because the prior distribution of the states at the root multiplies these ML values by the prior  in the MPP approach).

When the whole model is (partly) unknown, the ML and MPP approaches may no longer  be feasible.  But in these cases, simpler approaches exist. For example, for a simple symmetric model (e.g. Jukes-Cantor) and a  star tree with unknown branch lengths that are bounded above ($l_e \leq l < \infty$), we can estimate the ancestral state accurately by selecting the majority state (the consistency of this approach is justified by  large deviation theorems for sums of independent random variables).

However, even for symmetric models, it is clear that simply allowing $n$ to grow is not sufficient to allow for accurate inference of the ancestral state $s_0$; for example, we could have just two long edges incident with the root, and lots of very short edges that join the other endpoints of these edges to numerous taxa.  In this case, the tree basically behaves as a two-taxon tree and we have little information on the root when the two branches become too long.  Thus we seek relevant and reasonable constraints on the distribution of $l_e$ values for this accurate estimation to be possible.  One possibility  for generating taxon-dense trees is to evolve a Yule (pure birth) tree of total height $t$ and to select a large speciation rate $\lambda$ (we may then re-scale the rate on each edge by some bounded multiplicative factor to allow for violation of a strict molecular clock).

Moving away from symmetric models, selecting the majority state at the leaves as an estimate of the ancestral state  is not generally a sound strategy, even for a star tree, since the process after a long period of time will favour the state with the highest equilibrium frequency, regardless of the state at the root.

\section{Case I: Root state estimation without detailed knowledge of  $l_e$ under a symmetric Poisson model}
\label{parsec}
Under the symmetric $r$-state Poisson model, the maximum likelihood estimate of the root state, in the case where  the branch lengths ($l_e$) are unknown and are regarded as nuisance parameters to be optimized, is the maximum parsimony (MP) estimate (Theorem 6 of \cite{tuf}). In this setting, we can reliably estimate the root state, provided the taxon sampling is sufficiently dense that no edges are too long. This was suggested by the simulations in \cite{sal} and we establish two formal results now for the case when $r=2$.

\begin{proposition}
\label{nice3}
Consider any rooted binary phylogenetic tree $T$. Evolve a single site under the two-state symmetric model. Let $l_+$ be the maximum
branch length over all edges.  Provided that $l_{+} < \frac{1}{2}\log(\frac{4}{3})$, the probability $P^*$  that the maximum parsimony (MP) reconstruction of the root state is the true state (toss a fair coin if the two states are equally favored) satisfies:
$$P^* \geq  1-3l_+.$$
\end{proposition}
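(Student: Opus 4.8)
The plan is to reduce maximum parsimony on two states to the Fitch bottom-up pass and then track a recursion up the tree. Recall that for two states the bottom-up Fitch pass assigns to each vertex $v$ a subset of $\{0,1\}$ (the states attaining the minimum number of changes in the subtree below $v$), and the set assigned to the root is exactly the set of most-parsimonious root states; hence MP reconstructs the root correctly iff the true root state is the unique element of the root set, and the coin-toss rule contributes $\tfrac12$ when the root set is $\{0,1\}$. By the symmetry of the model I would condition on the true root state (say $0$) and, for each vertex $v$, write $a_v,b_v,c_v$ for the probabilities that the Fitch set at $v$ (conditioned on the true state at $v$) equals $\{\text{true}\}$, $\{0,1\}$, $\{\text{false}\}$, so that $a_v+b_v+c_v=1$ and the quantity to control is $1-P^{*}=c_{\mathrm{root}}+\tfrac12 b_{\mathrm{root}}$.

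Next I would derive the bottom-up recursion in two pieces: transmission across an edge $e$ mixes $(a,b,c)$ through the substitution probability $p_e=\tfrac12(1-e^{-2l_e})$, and the Fitch merge at a bifurcation combines the two child triples by the intersection/union rule. Reparametrising by the reconstruction bias $\Delta_v:=a_v-c_v$ and the ambiguity $b_v$ turns this into a clean system: writing $\theta_e:=e^{-2l_e}=1-2p_e$, a bifurcation with children $u_1,u_2$ gives $\Delta_v=\tfrac12\big[(1+b_{u_2})\theta_1\Delta_{u_1}+(1+b_{u_1})\theta_2\Delta_{u_2}\big]$ together with a companion quadratic recursion for $b_v$. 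Here $\theta_e\ge\theta_{+}:=e^{-2l_+}$, and the hypothesis $l_+<\tfrac12\log\tfrac43$ is precisely $\theta_{+}>\tfrac34$. Since $1-P^{*}=\tfrac12(1-\Delta_{\mathrm{root}})$, it suffices to show $\Delta_{\mathrm{root}}\ge 1-6l_+$.

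The core step is an induction up the tree establishing a joint invariant on the pair $(\Delta_v,b_v)$: I would exhibit a region $\mathcal{R}$ in the $(\Delta,b)$-plane that contains the leaf value $(1,0)$, is mapped into itself by the bifurcation recursion for every admissible $\theta_1,\theta_2\in[\theta_{+},1]$, and lies inside $\{\Delta\ge 1-6l_+\}$. The threshold enters exactly to keep the ambiguity subcritical: below $l_+=\tfrac12\log\tfrac43$ the $b$-recursion has its balanced fixed point at $b=e^{2l_+}-1<\tfrac13$, whereas at the threshold the ambiguity reaches $\tfrac13$ and the bias $\Delta$ can collapse to $0$ (the parsimony reconstruction-solvability boundary for the binary tree), so no bound bounded away from $\tfrac12$ can hold beyond it. To convert the invariant into the stated linear bound I would use $p_e\le l_e$ and $1-\theta_e=2p_e\le 2l_e$ to replace the exponentials by linear upper bounds, with the factor $3$ absorbing the slack.

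I expect the inductive step to be the main obstacle, because the recursion is genuinely two-dimensional and nonlinear and $\mathcal{R}$ must be closed \emph{uniformly} over all tree shapes and all branch-length assignments with $l_e\le l_+$. One cannot control $\Delta$ alone, since its growth factor $\tfrac12(1+b)\theta$ depends on the ambiguity; it is the coupled control of $b$ (held below $\tfrac13$ by the threshold) that stops $\Delta$ from decaying as the tree deepens. A convenient way to organise this is to argue that the extremal configuration is the uniform ($\theta_e\equiv\theta_{+}$) fully balanced tree, so that closure of $\mathcal{R}$ reduces to a one-dimensional fixed-point estimate for that worst case, against which the constant $3$ is to be checked.
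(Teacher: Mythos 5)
Your setup is sound and in fact reconstructs exactly the machinery that underlies the result: the Fitch bottom-up pass, the reparametrisation $(\Delta_v,b_v)$, the transmission factor $\theta_e=e^{-2l_e}$, the threshold $\theta_+>\tfrac34$ (equivalently $g=\max_e p(e)<\tfrac18$), and the identity $1-P^*=\tfrac12(1-\Delta_{\mathrm{root}})$ are all correct. Moreover, the nontrivial fixed point of your balanced recursion, $\Delta=\frac{\sqrt{(1-4g)(1-8g)}}{(1-2g)^2}$ at $b=\frac{2g}{1-2g}$, is precisely twice the quantity $\Delta_g$ that appears in the paper. But the paper does not prove any of this machinery: its proof is three lines, citing part (ii) of Lemma 5.1 of Steel and Sz\'ekely (reference [tease], ``Teasing apart two trees''), which asserts $P^*\geq \tfrac12+\Delta_g$ whenever every edge satisfies $p(e)<\tfrac18$, and then finishing with the elementary inequalities $\frac{\sqrt{(1-4g)(1-8g)}}{2(1-2g)^2}\geq \tfrac12(1-6g)$ and $g\leq l_+$. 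Your proposal is, in effect, an attempt to re-derive that cited lemma from scratch.

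The genuine gap is that you never carry out the one step that constitutes the entire technical content of that lemma: the inductive closure of your region $\mathcal{R}$ under the bifurcation recursion, uniformly over all tree shapes and all $\theta_1,\theta_2\in[\theta_+,1]$. You flag this yourself as ``the main obstacle,'' but the workaround you propose --- that the extremal configuration is the uniform fully balanced tree, so that closure reduces to a one-dimensional fixed-point check --- is an unproven claim of essentially the same difficulty as the closure itself. The recursion merges two children whose states can be wildly different (one child may be a leaf with $(\Delta,b)=(1,0)$ while the other is a deep subtree near the fixed point), so extremality of the balanced, uniform-$\theta$ tree over all shapes and all branch-length assignments with $l_e\leq l_+$ cannot simply be asserted; any proof of it would be an induction of exactly the kind you are trying to avoid, making the reduction circular. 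To complete your route you would need to exhibit $\mathcal{R}$ explicitly (say $\mathcal{R}=\{(\Delta,b):0\leq b\leq \tfrac{2g}{1-2g},\ \Delta\geq\phi(b)\}$ for a suitable curve $\phi$) and verify the two-variable nonlinear closure inequality; without that, the claimed bound $\Delta_{\mathrm{root}}\geq 1-6l_+$ is not established. If instead you are willing to quote Steel--Sz\'ekely's Lemma 5.1(ii) at that point, then your remaining steps (the identity for $P^*$ and the linearisation $g\leq l_+$) do finish the argument --- but then your proof collapses into the paper's.
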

\begin{proof}
When $l_{+}$ satisfies satisfies the bound described then, for each edge $e$ of $T$ the probability that the endpoints of edge $e$ are in different states  $p(e)=\frac{1}{2}(1-e^{-2l_e})$ satisfies the inequality  $p(e) < \frac{1}{8}$.
It then follows from  part (ii) of Lemma 5.1 of \cite{tease}, that:
$$P^* \geq \frac{1}{2}+ \Delta_g,$$ where: $$\Delta_g = \frac{\sqrt{(1-4g)(1-8g)}}{2(1-2g)^2}, $$
and where $g = \max_e \{p(e)\}$.  The result now follows from the inequalities:
$$ \frac{\sqrt{(1-4g)(1-8g)}}{2(1-2g)^2} \geq \frac{1}{2}(1-6g), \mbox{ and } g \leq l_+. $$
\end{proof}

Unfortunately, in a Yule tree of fixed height, the expected value of $l_+$ does not converge to zero as the speciation rate $\lambda$ tends to infinity. This may seem surprising, since each edge in the tree converges in length to $0$ as $\lambda$ grows; however, the expected number of edges increases with $\lambda$, and the probability that at least one of them is `long' turns out to be positive.
 Simulations suggest that the expectated value of $l_{+}$ converges to a value
close to 60\% of the height of the tree; the following result, the proof of which is provided in the Appendix, establishes a smaller lower bound.

\begin{proposition}
\label{yuleprop}
Suppose a random rooted binary tree $T_\lambda$ is generated by a Yule (pure birth) process with speciation rate $\lambda$ acting for time $t$.  Let $l_{+} = l_{+}(\lambda)$ denote the length of the longest edge in $T$.
Then $\EE[l_{+}(\lambda)]$ does not converge to $0$ as $\lambda \rightarrow \infty$.
\end{proposition}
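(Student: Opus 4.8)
The plan is to exploit the ultrametricity of a pure-birth tree: every leaf lies at distance exactly $t$ from the root, so every edge has length at most $t$, and in particular $\EE[l_+(\lambda)] \geq c\,\PP(l_+(\lambda) \geq c)$ for any threshold $c \in (0,t)$. It therefore suffices to fix some $c \in (0,t/2)$ and show that $\PP(l_+(\lambda) \geq c)$ stays bounded away from $0$ (in fact I expect it to tend to $1$) as $\lambda \to \infty$. The intuition I want to capture is that, although each individual edge shrinks as $\lambda$ grows, the number of lineages present late in the process grows so fast that with high probability at least one of them survives an unusually long time without speciating.

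To make this precise I would condition on the history $\mathcal{F}_{t-c}$ of the process up to time $t-c$, and let $M = N(t-c)$ denote the (random) number of lineages alive at that time. The key observation is that any lineage alive at time $t-c$ was born at some time $b \leq t-c$; if such a lineage does not speciate during the window $[t-c,t]$, then the edge carrying it runs from $b$ to $t$ and hence has length $t-b \geq c$. Consequently, if no edge has length $\geq c$, then \emph{every} one of the $M$ lineages must speciate during $[t-c,t]$. Conditionally on $\mathcal{F}_{t-c}$ the $M$ lineages evolve as independent pure-birth processes, and by memorylessness of the exponential clock each one avoids speciation on a window of length $c$ with probability exactly $e^{-\lambda c}$, independently of the others. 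This yields the bound $\PP(l_+(\lambda) < c \mid \mathcal{F}_{t-c}) \leq (1 - e^{-\lambda c})^{M}$, and hence $\PP(l_+(\lambda) < c) \leq \EE\big[(1 - e^{-\lambda c})^{N(t-c)}\big]$.

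The remaining step is a routine computation. Since $N(u)$ is geometrically distributed with $\PP(N(u)=k) = e^{-\lambda u}(1-e^{-\lambda u})^{k-1}$, its probability generating function evaluated at $a = 1 - e^{-\lambda c}$ (with $u = t-c$) is available in closed form, giving
$$\EE\big[(1-e^{-\lambda c})^{N(t-c)}\big] = \frac{1 - e^{-\lambda c}}{e^{\lambda(t-2c)} + 1 - e^{-\lambda c}}.$$
For any fixed $c < t/2$ the exponent $t-2c$ is positive, so the denominator diverges and this expression tends to $0$ as $\lambda \to \infty$; therefore $\PP(l_+(\lambda) \geq c) \to 1$ and $\liminf_{\lambda\to\infty}\EE[l_+(\lambda)] \geq c$. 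Letting $c \uparrow t/2$ even gives the explicit bound $\liminf_{\lambda\to\infty}\EE[l_+(\lambda)] \geq t/2$, comfortably below the $\approx 0.6\,t$ suggested by the simulations.

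I would flag the conditional-independence step as the main thing to get right: one must argue from the branching/Markov structure that, given the configuration at time $t-c$, the survival events of the distinct lineages on $[t-c,t]$ are genuinely independent, and that a lineage surviving the window really does force an edge of length at least $c$ (this is where ultrametricity and the bound $b \leq t-c$ on the birth time are used). Everything after that---the geometric law of $N(t-c)$ and the generating-function identity---is mechanical.
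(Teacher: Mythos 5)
Your proof is correct, and it takes a genuinely different---and in fact stronger---route than the paper's. The paper conditions on the population size at the half-way time $t/2$: it uses the limiting exponential law of $N_{t/2}e^{-\lambda t/2}$ together with a Markov-inequality argument to show that, with probability at least some fixed $\delta>0$, the population sizes at times $t/2$ and $t$ are simultaneously ``typical'', and then invokes a combinatorial property of Yule tree shapes (a Yule tree on $n_i$ leaves has its root adjacent to a leaf with probability exactly $2/n_i$) to conclude that with probability at least $\delta(1-e^{-1/4})$ some edge has length at least $t/4$. You instead condition at time $t-c$ and use only the branching property plus the exact geometric law of $N(t-c)$: the event $\{l_+<c\}$ forces every one of the $N(t-c)$ lineages to speciate in the window $[t-c,t]$, and your generating-function computation shows this has probability tending to $0$ whenever $c<t/2$. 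The two steps you flag as delicate are both sound: conditional independence of the lineages given $\mathcal{F}_{t-c}$ is exactly the Markov/branching property of a pure-birth process, and a lineage surviving the window does force a pendant edge of length at least $c$, since the edge it occupies was born at some $b\le t-c$ and persists to time $t$. The PGF identity also checks out, as $\EE[a^{N(u)}]=ap/\bigl(1-a(1-p)\bigr)$ with $p=e^{-\lambda u}$ and $a=1-e^{-\lambda c}$ reduces to your expression. What your approach buys: it avoids the tree-shape combinatorics and the limiting-distribution input entirely, it proves the stronger statement $\PP(l_+\geq c)\to 1$ for every fixed $c<t/2$ (the paper's probability is only bounded away from zero, not shown to tend to one), and it yields $\liminf_{\lambda\to\infty}\EE[l_+(\lambda)]\ge t/2$, which is far sharper than the paper's implicit bound of order $\delta(1-e^{-1/4})\,t/4$ and sits comfortably below the simulated limiting value of roughly $0.6\,t$ quoted in the text.
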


Thus we cannot directly
apply Proposition \ref{nice3} to Yule trees.   Nevertheless, we can precisely determine the probability with which MP will correctly reconstruct the root state of a Yule tree under a symmetric Poisson substitution model on two states. In particular, provided the speciation rate passes a critical threshold (six times the substitution rate), then even for large trees where many leaves are far from the root, ancestral reconstruction is feasible.  Moreover, as the ratio of speciation rate to substitution rate tends to infinity,  we can correctly infer the root state with probability tending to $1$.

\subsection{MP root estimation for Yule trees under the two-state model}

Consider a pure-birth Yule tree that starts with a single (root) lineage at time $0$ and is grown until time $t$, with speciation rate $\lambda$.  
Suppose we  also have a binary character that evolves from some ancestral state at the root of the tree towards the leaves by  undergoing substitution along the edges of the tree  at rate $\mu$ according to a symmetric Markov process.
  Thus, we have a random tree (with a random number of leaves at time $t$) and a random binary character observed at the leaves. 
   Let $P_t$ denote the probability that the maximum parsimony estimate  for the state at the root of the tree, derived from the
observed states at the leaves of the tree at time $t$, matches the true root state (in the case that both states are equally parsimonious at the root, select one state with equal probability).
\begin{theorem}
\label{yulepars}
\begin{itemize}
\item[(i)]
If $\lambda  \geq 6 \mu$, then for all $t \geq 0$:
$$P_t \geq \frac{1}{2}(1 + \sqrt{(1-6\rho)(1-2\rho)}) \geq 1-3\rho;$$
where $\rho = \mu / \lambda$.
In particular, $P_t \rightarrow 1$ as $\rho \rightarrow 0.$
Moreover, $P_t$ is monotone decreasing in $t$ with  limit:
$$\lim_{t \rightarrow \infty} P_t= \frac{1}{2}(1 + \sqrt{(1-6\rho)(1-2\rho)}).$$
\item[(ii)]
If $\lambda  < 6 \mu$ we have:
$$\lim_{t \rightarrow \infty} P_t= \frac{1}{2}.$$
\end{itemize}
\end{theorem}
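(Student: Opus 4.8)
The plan is to exploit the self-similarity of the Yule process to collapse the whole question into a two-dimensional autonomous ODE, and then to analyze its phase portrait. By symmetry fix the true root state to be $0$. For a lineage present with remaining time $\tau$ until the leaves and carrying state $0$, let $a(\tau),b(\tau),c(\tau)$ be the probabilities that the Fitch (maximum parsimony) set computed from its descendant leaves equals $\{0\}$, $\{1\}$, and $\{0,1\}$ respectively (so $a+b+c=1$; by the $0\leftrightarrow1$ symmetry the corresponding probabilities for a state-$1$ lineage are $b,a,c$). Conditioning on the first event along the top lineage -- a substitution at rate $\mu$, or a speciation at rate $\lambda$ producing two conditionally independent subtrees started in the same state, combined by Fitch's rule -- gives a renewal integral equation whose derivative is a polynomial ODE system. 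Writing $u=a-b$ and substituting $a=\tfrac12((1-c)+u)$, $b=\tfrac12((1-c)-u)$, this collapses to the clean pair
\[
u' = u(\lambda c - 2\mu), \qquad c' = \frac{\lambda}{2}\big(3c^2 - 4c + 1 - u^2\big),
\]
with $u(0)=1$, $c(0)=0$. Since the MP rule is correct with probability $1$ on $\{0\}$, $0$ on $\{1\}$, and $\tfrac12$ on a tie, one gets $P_t = a(t)+\tfrac12 c(t) = \tfrac12(1+u(t))$, so everything reduces to this planar flow.

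Next I would locate the equilibria. Setting $u=0$ forces $c\in\{\tfrac13,1\}$, while $c=2\rho$ (with $\rho=\mu/\lambda$) forces $u^2=(1-6\rho)(1-2\rho)$; thus for $\rho\le\tfrac16$ there is an interior fixed point $(c^\star,u^\star)=(2\rho,\sqrt{(1-6\rho)(1-2\rho)})$, which already pins down the candidate limit $\lim P_t=\tfrac12(1+u^\star)$ in (i). A quick Jacobian computation shows this point is a \emph{stable node with real eigenvalues} (the discriminant is proportional to $\rho(2-3\rho)>0$, so no spiralling), whereas $(\tfrac13,0)$ is a saddle whose stable manifold lies in the invariant line $\{u=0\}$. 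Case (ii) is then clean and I would give it in full: the line $c=\tfrac13$ is a barrier since at $c=\tfrac13$ one has $c'=-\tfrac{\lambda}{2}u^2\le 0$, so $c(t)\le\tfrac13$ for all $t$; when $\rho>\tfrac16$ we have $2\rho>\tfrac13\ge c(t)$, hence $\lambda c-2\mu\le\lambda(\tfrac13-2\rho)=:-\kappa<0$, giving $u'\le-\kappa u$ and $u(t)\le e^{-\kappa t}\to 0$, so $P_t\to\tfrac12$ (for \emph{all} $\rho>\tfrac16$, with no upper restriction needed).

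The real work is the monotonicity and convergence in part (i). The key observation is that $P_t$ is monotone decreasing precisely when $u'\le 0$ for all $t$, i.e. precisely when $c(t)\le 2\rho$ for all $t$; and once this no-overshoot bound holds the limit follows almost for free (if $c_\infty<2\rho$ then $u_\infty=0$ forces $c_\infty=\tfrac13$, contradicting $c_\infty<2\rho\le\tfrac13$, so $c_\infty=2\rho$ and $u_\infty=u^\star$). The obstacle is therefore to show the trajectory from $(0,1)$ reaches $c=2\rho$ only as $t\to\infty$, approaching the node $(2\rho,u^\star)$ from the region $c<2\rho$ rather than crossing $c=2\rho$ at finite time. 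My plan is a trapping-region argument: the $c$-nullcline $\Gamma:\,u=\sqrt{(1-3c)(1-c)}$ is an upper barrier, since on $\Gamma$ with $c<2\rho$ one computes $\tfrac{d}{dt}\!\big(u^2-(1-3c)(1-c)\big)=2u^2(\lambda c-2\mu)<0$, pushing the flow below $\Gamma$; a lower barrier curve -- tangent at the node to the slow eigendirection (whose slope magnitude is strictly below that of $\Gamma$, keeping it between $\Gamma$ and the horizontal $u=u^\star$) and passing through $(0,1)$ -- confines the trajectory from below. Trapped between the two, $c$ increases monotonically to $2\rho$ and $u$ decreases monotonically to $u^\star$. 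Constructing and verifying this lower barrier, equivalently ruling out a finite-time crossing of $c=2\rho$ (which a crude numerical estimate can misleadingly suggest), is the delicate step I expect to be the main difficulty.

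The remaining assertions are then routine. Monotone decrease of $u$ gives $P_t\ge\lim_{s\to\infty}P_s=\tfrac12(1+u^\star)$ for all $t$. The stated bound $\tfrac12(1+u^\star)\ge 1-3\rho$ follows from $(1-6\rho)(1-2\rho)-(1-6\rho)^2=4\rho(1-6\rho)\ge 0$, which yields $u^\star\ge 1-6\rho$ and hence $\tfrac12(1+u^\star)\ge 1-3\rho$; and $P_t\to 1$ as $\rho\to 0$ is immediate since $u^\star\to 1$.
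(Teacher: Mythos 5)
Your proposal follows essentially the same route as the paper: condition on the first event along the root lineage to obtain a polynomial ODE system for the probabilities that the parsimony root set is $\{0\}$, $\{1\}$ or $\{0,1\}$, then analyse the resulting planar autonomous flow. Your coordinates $(u,c)=(S-D,E)$ (versus the paper's $(S,D)$) are a genuine streamlining: I checked that your pair $u'=u(\lambda c-2\mu)$, $c'=\tfrac{\lambda}{2}(3c^2-4c+1-u^2)$ is equivalent to the paper's system, that $P_t=\tfrac12(1+u)$, that the fixed points and the value $u^\star=\sqrt{(1-6\rho)(1-2\rho)}$ agree with the paper's, that the discriminant at the interior fixed point is indeed $4\lambda^2\rho(2-3\rho)$, and that your slope comparison for the slow eigendirection is correct. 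Your part (ii) is actually \emph{more} complete than the paper's: the barrier $c\le\tfrac13$ (since $c'=-\tfrac{\lambda}{2}u^2\le0$ there) together with $u(t)=\exp\bigl(\int_0^t(\lambda c(s)-2\mu)\,ds\bigr)\le e^{-\kappa t}$ is a self-contained proof, where the paper appeals only to a general phase-portrait analysis. The closing inequalities ($u^\star\ge 1-6\rho$, hence $\tfrac12(1+u^\star)\ge 1-3\rho$) are also right.

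The gap is exactly where you flag it, and it is genuine: you never construct the lower barrier, and without it part (i) is unproven. The nullcline barrier only shows that while $c<2\rho$ one has $u^2\le(1-3c)(1-c)$, so $u$ decreases and $c$ increases; it does not exclude the trajectory dipping below the level $u=u^\star$ and exiting through the segment $\{c=2\rho,\ 0\le u<u^\star\}$ at finite time, after which $u'>0$ and both the monotonicity of $P_t$ and the for-all-$t$ lower bound collapse. So ``trapped between the two, $c$ increases monotonically to $2\rho$'' is precisely the assertion that still needs proof. Two remarks. First, in fairness, the paper does not prove this step either: it simply asserts that $\Delta_u=S_u-D_u$ is ``non-negative and monotone decreasing''; you have at least isolated the delicate point honestly rather than eliding it. Second, the \emph{limit} in part (i) can be rescued without monotonicity by the paper's route, which your write-up omits: Dulac's criterion excludes limit cycles, Poincar\'e--Bendixson then forces convergence of the bounded trajectory to an equilibrium, and since $u(t)>0$ for all finite $t$ the trajectory avoids the line $u=0$, which carries the source at $c=1$ and is the stable manifold of the saddle $(c,u)=(\tfrac13,0)$; hence the only attainable limit is the node $(2\rho,u^\star)$. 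That argument delivers $\lim_{t\to\infty}P_t$ (and part (ii)), but the pointwise bound $P_t\ge\tfrac12(1+u^\star)$ for all $t$ still requires the no-overshoot claim, so under either your route or the paper's, constructing and verifying the lower barrier (or otherwise proving $c(t)\le 2\rho$ for all $t$) is the one step that must actually be carried out; note also that at the boundary case $\rho=\tfrac16$ the node and saddle merge, so your nondegenerate-node analysis needs the separate (easy) observation that $c\le\tfrac13=2\rho$ is automatic there.
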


\begin{figure}[ht]
\includegraphics[width=0.8 \textwidth] {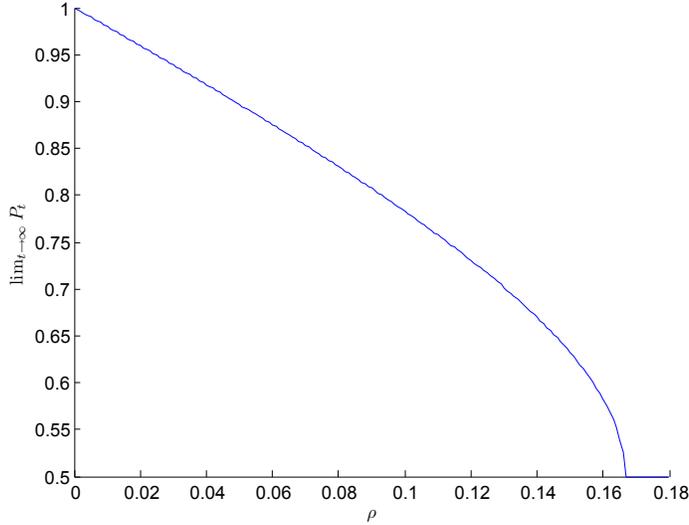}
\caption{The limiting value $\lim_{t\rightarrow \infty} P_t$ as a function of $\rho$ for $\rho \leq 1/6$. }
\label{fig:mat}
\end{figure}

\begin{proof}
Let $0$ and $1$ denote the two states that undergo substitution on the Yule tree. Since the Markov process is symmetric we may suppose, without loss of generality, that $0$ is the initial character state at time $t=0$.  From the (random) evolved states on the leaves, estimate the root state using the maximum parsimony criterion (i.e. select the root state that minimizes the total number of substitutions required to describe the evolution of the character on the tree).   There may be a unique reconstructed root state (which may be the same or opposite to the true initial state) or both states may be equally parsimonious.  
Let $S_t$ (resp. $D_t$) be the probability that $0$ (resp. $1$) is the unique most parsimonious root state reconstructed from the observed states at the leaves.   Let $E_t = 1-S_t-D_t$ be the probability that both states are equally parsimonious.
We have:
\begin{equation}
\label{right}
P_t = S_t + \frac{1}{2}E_t = \frac{1}{2} + \frac{1}{2}(S_t-D_t).
\end{equation}
We can generate a system of non-linear first-order differential equations for $(S_t, D_t, E_t)$  as follows.  Consider that in the first $\delta$ period of time,
the root lineage can either:
 \begin{itemize}
 \item  persist, without a substitution occurring,
  \item
  persist, with a substitution occurring, or
  \item  it can speciate into two lineages.
  \end{itemize}
This gives:
$$S_{t+\delta} = (1-\mu\delta - \lambda \delta) S_t + \mu \delta D_t + \lambda \delta (S_t^2 + 2S_tE_t) + O(\delta^2).$$
Similarly,
$$D_{t+\delta} = (1-\mu\delta - \lambda \delta) D_t + \mu \delta S_t + \lambda \delta (D_t^2 + 2D_tE_t) + O(\delta^2),$$
$$E_{t+\delta} = (1-\mu\delta - \lambda \delta) E_t + \mu \delta E_t + \lambda \delta (E_t^2 + 2S_tD_t) + O(\delta^2).$$
Rearranging these expressions and letting $\delta \rightarrow 0$ produces the differential equation system:
$$\frac{dS_t}{dt} + (\lambda + \mu) S_t = \mu D_t +\lambda(S_t^2 + 2S_tE_t);$$
$$\frac{dD_t}{dt}  + (\lambda + \mu) D_t = \mu S_t + \lambda(D_t^2 + 2D_tE_t);$$
and
$$\frac{dE_t}{dt} +  (\lambda + \mu) E_t = \mu E_t + \lambda(E_t^2 + 2S_tD_t).$$

Notice that we can use the relationship $S_t+D_t+E_t =1$ to eliminate $E_t$, and by writing $u = \lambda t$ we obtain a
two-dimensional autonomous differential equation system for $S= S_u, D=D_u$:
$$\frac{dS}{du} = f(S, D); \mbox{ }  \frac{dD}{du}=f(D, S),$$
where:
$$f(x,y) = (1-\rho)x+\rho y -2xy - x^2.$$
Now, $(S,D)$  is confined to the simply-connected, two-dimensional, compact region $S, D \geq 0, S +D  \leq 1$, and
we can analyse its dynamics using standard phase-portrait methods for autonomous two-dimensional dynamical systems (see e.g. 
\cite{str}).  We note first, that $(S,D)$ has no limit cycle by virtue of Dulac's criterion (with $g(x,y) = 1/xy$, for details see \cite{str}).   From the starting condition $(P,D) = (1,0)$, at $u=t=0$, the quantity $\Delta_u = S_u - D_u$ is non-negative and monotone decreasing, and $(S,D)$ converges to
an asymptotically stable equilibrium point, which can be found by solving the system $\frac{dS}{du}=\frac{dD}{du}=0$ and carrying out
an eigenvalue analysis  of the Jacobian of the system.

Solving  $\frac{dS}{du}=\frac{dD}{du}=0$ is equivalent to solving the pair of simultaneous quadratic equations $f(s,d) =0, f(d,s)=0$. Subtracting the second of these equations from the first gives:
\begin{equation}
\label{eq1}
(s-d)(1-2\rho -s-d)=0.
\end{equation}
Thus, either $s=d$ or $s+d= 1-2\rho$.
If $s=d$, then the equation $f(s,d)=0$ becomes $s-3s^2=0$, which has two possible solutions: either
$s=d=\frac{1}{3}$ or $s=d=0$;  the first of these is asymptotically stable when $\rho>1/6$.

In the other case, where $s+d=1- 2 \rho$, $f(s,d)=0$ becomes:
$$s^2-(1-2\rho)s + \rho(1-2\rho)=0,$$
which also has two possible solutions:
$$s = \frac{1-2\rho \pm \sqrt{(1-6\rho)(1-2\rho)}}{2}$$
both of which are asymptotically stable with $\rho < 1/6$.
Since $s-d\geq 0$ (since $\Delta_u>0$), the positive sign applies in the previous equation.
Since in this case $e=1-s-d=2\rho$, we have: $$s+\frac{1}{2}e =  \frac{1}{2}(1 + \sqrt{(1-6\rho)(1-2\rho)}).$$

The results stated in the theorem now follow, since
 Eqn.~(\ref{right}) allows us to write:
\begin{equation}
\label{right2}
P_{\lambda t} = \frac{1}{2}+\frac{1}{2}\Delta_u
\end{equation}
and so $P_t$
is monotone decreasing with $t$, and we also have  the inequality
 $\frac{1}{2}(1 + \sqrt{(1-6\rho)(1-2\rho)}) \geq 1-3\rho$  when $\rho<1/6$.
\end{proof}

It would be interesting to obtain corresponding results for maximum parsimony for more general models - particularly for symmetric models on more than two states (some limited results are described in \cite{ste}, Sections 9.4.1 and 9.5.1). Here we offer the following:

\begin{conjecture}
For the $r$-state symmetric model Proposition~\ref{nice3} generalizes to give a  lower bound on $P^*$ of $1-c_r \cdot l_+$ for some constant $c_r>0$.  Similarly, Theorem~\ref{yulepars} generalizes to give an analogous result, where the critical ratio
$\lambda/\mu = 6$ is replaced by $\lambda/\mu = c'_r$ for some constant $c'_r>0$.
\end{conjecture}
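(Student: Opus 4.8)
The plan is to handle both halves of the conjecture through a single observation: under the $r$-state symmetric model the MP root estimate depends on the leaf data only through the \emph{Fitch set} $F$ at the root --- the set of states attaining the minimum subtree parsimony score --- and $F$ is obtained by Fitch's bottom-up rule (intersection if non-empty, else union) from the Fitch sets of the two root subtrees. Hence $F$ evolves down the tree as a Markov chain on the non-empty subsets of $\S$. Fixing the true root state to be $0$, the invariance under permutations of $\S$ fixing $0$ is preserved by the dynamics, so this chain lumps to a Markov chain on the $2r-1$ \emph{types} recording whether $0\in F$ and the value of $|F|$. Both $P^*$ and $P_t$ are linear functionals of the resulting root type-distribution, so everything reduces to controlling that distribution.

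For the generalization of Proposition~\ref{nice3} I would replace its two-state input, Lemma 5.1(ii) of \cite{tease}, by an $r$-state reconstruction bound of the same shape: a threshold $g_r^\ast>0$ and a quantity $\Delta_g^{(r)}$ such that $g=\max_e p(e)<g_r^\ast$ forces the true-state probability of $F$ at the root to exceed $\tfrac1r+\Delta_g^{(r)}$, with $\Delta_g^{(r)}\to\tfrac{r-1}{r}$ as $g\to0$. The natural route is a recursive second-moment analysis of the Fitch-set chain on the binary tree, paralleling \cite{tease} but carried out on the $(2r-1)$-dimensional type space, exploiting that as $g\to0$ the only $O(g)$ error mode is a substitution on an edge incident to the root while deeper errors stay suppressed below threshold; this delivers $P^*\ge 1-c_r l_+$ after using $p(e)\le c\,l_e\le c\,l_+$. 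The hard part is making the threshold and the linear estimate uniform over all trees: the two-state proof rests on the scalar recursion for the magnetization $S-D$, whereas for $r\ge3$ one must control a vector recursion and forbid the accumulation of deep errors --- exactly the substance of reconstruction-threshold theory for the symmetric Potts channel.

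For the generalization of Theorem~\ref{yulepars} I would write down the autonomous system for the root type-distribution of a height-$t$ Yule tree, derived as in the proof above: a no-event term, a substitution term that by the $0\leftrightarrow j$ symmetry moves mass from true-state types to wrong-state types, and a speciation term combining two independent copies under Fitch's rule, whose coefficients are the hypergeometric probabilities that two uniformly random subsets of prescribed sizes overlap in each possible way. After the rescaling $u=\lambda t$ the sole parameter is $\rho=\mu/\lambda$, and I would identify the critical ratio by a bifurcation analysis at the symmetric equilibrium: linearise there and locate the value $\rho^\ast$ at which the symmetry-breaking eigenvalue (the mode separating the true state from the rest) crosses zero, giving $c'_r=1/\rho^\ast$. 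This recovers the known answer, since at $S=D=E=\tfrac13$ that eigenvalue equals $\tfrac13-2\rho$, vanishing exactly at $\rho=\tfrac16$.

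Two steps, which I expect to be the crux of the whole problem, remain. First, one must show that the symmetry-breaking eigenvalue is monotone in $\rho$ and changes sign exactly once, so that $\rho^\ast$ (hence $c'_r>0$) is well defined, and that for $\rho<\rho^\ast$ the trajectory from the initial condition (true state certain) converges to an informative equilibrium with true-state probability $1-O(\rho)$ rather than to the uniform one; in dimension $2r-2$ this replaces the clean planar Dulac/monotonicity argument of the $r=2$ case, and a monotone-dynamical-systems or Lyapunov argument will likely be needed to exclude spurious limit cycles and intermediate equilibria. Second, extracting the linear bound $P_t\ge 1-O(\rho)$ and the monotonicity of $P_t$ in $t$ requires a quantitative expansion of that equilibrium and its basin, routine for $r=2$ but genuinely harder once the equilibrium is the root of a higher-degree system. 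I expect the eigenvalue characterisation of $c'_r$ --- building on the partial results in \cite{ste} --- to go through cleanly, with the global convergence and monotonicity statements being the genuinely difficult steps.
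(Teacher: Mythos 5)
This statement is posed in the paper as an open \emph{conjecture}; the paper contains no proof of it, so the only benchmarks are the two $r=2$ arguments it asks to generalize (Proposition~\ref{nice3} via Lemma 5.1(ii) of \cite{tease}, and Theorem~\ref{yulepars} via a planar phase-portrait analysis). Your framework is the correct and natural generalization of both: the MP root estimate is indeed a function of the root Fitch set; conditioning on true root state $0$, invariance under permutations of $\S$ fixing $0$ does lump the Fitch-set process onto the $2r-1$ types $(\mathbf{1}_{\{0\in F\}},|F|)$; $P^*$ and $P_t$ are linear in the type distribution (e.g.\ $P^*=\EE[\mathbf{1}_{\{0\in F\}}/|F|]$); the speciation coefficients are the hypergeometric overlap probabilities you describe; and your check that the symmetry-breaking eigenvalue at the uninformative equilibrium equals $\tfrac13-2\rho$ when $r=2$ is correct (the Jacobian of $f(x,y)=(1-\rho)x+\rho y-2xy-x^2$ at $(\tfrac13,\tfrac13)$ has eigenvalues $-1$ and $\tfrac13-2\rho$ on the symmetric and antisymmetric modes). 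But what you have written is a program, not a proof. For the first half of the conjecture, everything is delegated to an unproven $r$-state analogue of Lemma 5.1(ii) of \cite{tease}; producing such a bound, uniformly over all binary tree shapes with short edges, \emph{is} the content of that half, and ``a recursive second-moment analysis on the $(2r-1)$-dimensional type space'' is a hope rather than an argument --- as you yourself concede.

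For the second half there is, in addition, a gap you do not acknowledge, located precisely in the step you expect to ``go through cleanly.'' Setting $c'_r=1/\rho^*$, where $\rho^*$ is the value at which the symmetry-breaking eigenvalue of the linearization at the uninformative equilibrium vanishes, is legitimate only if the bifurcation is supercritical, as it is for $r=2$: there the informative equilibria $s=\tfrac12\bigl(1-2\rho\pm\sqrt{(1-6\rho)(1-2\rho)}\bigr)$ merge into the symmetric point exactly at $\rho=1/6$, so the local and global thresholds coincide. For $r\geq 3$ the polynomial type-distribution dynamics may instead undergo a subcritical (first-order) transition: informative equilibria can persist and remain attracting for the trajectory launched from the informative initial condition even on the side of $\rho^*$ where the uninformative equilibrium is locally stable, in which case the true critical ratio is determined by a saddle-node or basin-boundary condition, not by linearization, and your identification of $c'_r$ is simply wrong. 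This is not a hypothetical worry: for multi-state symmetric channels on trees, reconstruction-type thresholds are known to detach from the linear (Kesten--Stigum) bound as the number of states grows, which is exactly the mechanism that would break your computation here. So your plan must either verify supercriticality for each $r$ or characterize the threshold globally, and that --- together with the excluded-limit-cycle, global-convergence and monotonicity arguments you explicitly defer --- means both halves of the conjecture remain open under your proposal.
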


\section{Case II:   Conservative GTR proceses}
\label{consec}

 For any Markov process, we often write $\PP_i(X_t = j)$ or $p_{ij}(t)$ for the conditional probability $\PP(X_t=j|X_0=i)$ that $X_t =j$ given that $X_0 = i$.
We will say that a GTR model is {\em conservative}  if, for every state $i$ we have: $$p_{ii}(t) > p_{ij}(t) \mbox{ for all } t \geq 0 \mbox{ and all } j \neq i.$$
This is the `forward inequality' described by Sober  \cite{sob}.
The Kimura two-parameter (K2P) model (and every submodel, such as Jukes-Cantor) is an example of a conservative process (see Fig.~\ref{fig:mat2}).
In this model the substitution probabilities are given as follow (for details see \cite{swo}):
$$p_{ij}(t) = \begin{cases}
\frac{1}{4}(1+e^{-\mu t} + 2e^{-\mu t(\kappa+1)/2}),  &\text{if~} i=j;\\\
\frac{1}{4}(1+e^{-\mu t} - 2e^{-\mu t(\kappa+1)/2}), &\text{if~} i \rightarrow j \mbox{ is a transition; } \\\
\frac{1}{4}(1-e^{-\mu t}), &\text{if } i \rightarrow j \mbox{ is a transversion.}\
\end{cases}
$$

\begin{figure}[ht]
\includegraphics[width=0.8\textwidth] {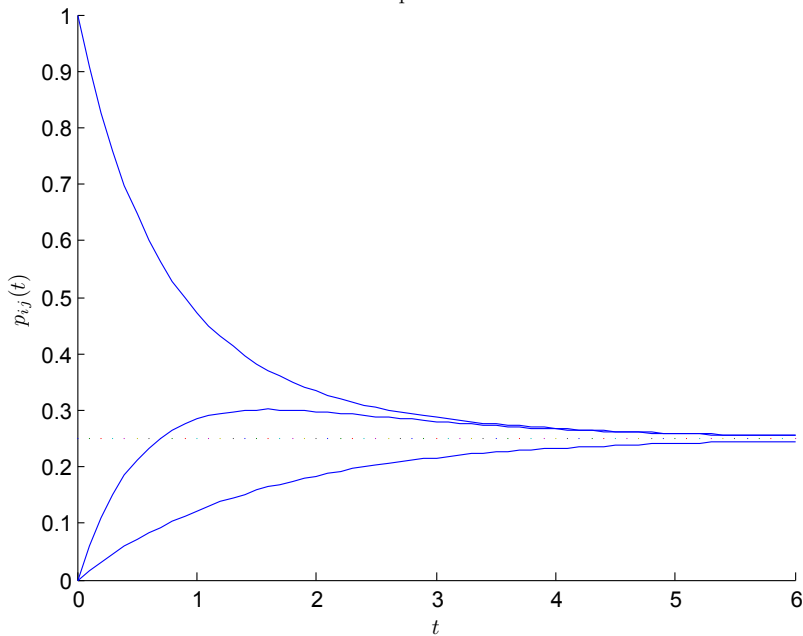}
\caption{The three substitution probabilities for the Kimura 2ST model. This model is conservative (but not monotone). In  this example,  $\kappa =4$ and $\mu$ is chosen to be $2/3$ so that $t$ corresponds to the expected
number of substitutions. The curve descending from $1$ is the function $p_{ii}(t)$. The middle curve, which has a local maximum around $1.6$ is the probability of a transition (A $\leftrightarrow$ G or C $\leftrightarrow$ T ); the lower curve is the probability of a transversion (Purine (A or G) $\leftrightarrow$ Pyrimidine (C or T).}
\label{fig:mat2}
\end{figure}

With a conservative model, the majority rule applies for ancestral reconstruction.  Assuming state $i$ at the tree root, the probability of observing $i$ at any tree leaf is higher than the probability of observing any
 particular alternative state $j$. This holds true for whatever the root-to-leaf distances and the tree topology. With a star tree, with an upper bound on the root-to-leaf distances, the probability that this inference rule makes the correct selection tends to $1$ as the number of leaves grow (by the central limit theorem for sums of independent random variables).  We shall see that this result still holds  for a more general class of trees under mild assumptions.   We now describe this class of trees and their properties.

\subsection{Well-spread trees}

Given a rooted phylogenetic $X$-tree and a leaf $x \in X$, let:
$$l_x := \sum_{e \in P(\rho, x)}l_e,$$ the sum of the branch lengths on the path $P(\rho, x)$ from
the root of the tree ($\rho$) to leaf $x$, and where $X$ is the set of $n$ leaves.
Similarly, for distinct leaves $x,y \in X$, let:
$$l_{xy} =  \sum_{e \in P(\rho, x)\cap P(\rho, y)}l_e,$$
the total length of the shared paths from $\rho$ to the leaves $x,y$.  Finally, define the {\em spread} of $T$ as:   $$s(T) := \frac{\sum_{x,y} \min\{l_{xy}, 1\}}{n(n-1)}.$$
Thus, provided $l_{xy}<1$ for all $x,y$,  $s(T)$ the average value of $l_{xy}$ over pairs $x,y$.
We say that $T$ is {\em well spread} if $s(T)$ is small; more precisely, $T$ is $1-\beta$ spread if $s(T) \leq \beta$. In particular, a
tree is a star tree if and only if it is 1-spread.

Note that a well-spread tree must have a large number of edges close to its root; an example is shown in Fig.~\ref{fig:trees}.
\begin{figure}[ht]
\includegraphics[width=1.0 \textwidth] {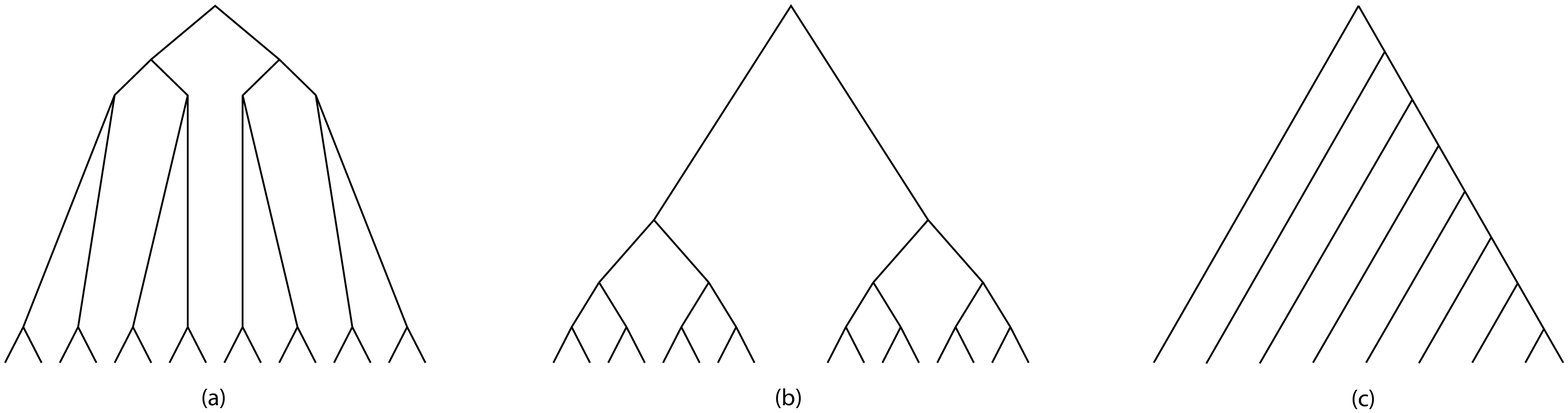}
\caption{(a) A well-spread tree; (b, c) Trees that are not well-spread.}
\label{fig:trees}
\end{figure}

It is easily shown that a sufficient condition for a tree to be $1-\beta$ spread is that, for some $\epsilon, \delta>0$ with $\epsilon+\delta< \beta$,
the proportion of pairs of leaves whose paths from the root to those leaves overlap by a length of at least $\epsilon$ is no more than $\delta$.
We use this observation to show that the spread of a Yule pure-birth tree of fixed height $t$ approaches $1$ as the speciation rate grows.

\begin{proposition}
Consider a random Yule pure-birth tree $T$  that has speciation rate $\lambda$ and fixed height $t$. Then for any $\beta>0$, the probability that $T$ is $1-\beta$ spread converges to $1$ as $\lambda \rightarrow \infty$.
\end{proposition}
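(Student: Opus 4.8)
The plan is to verify the sufficient condition stated just before the proposition: given $\beta>0$, it is enough to exhibit $\epsilon,\delta>0$ with $\epsilon+\delta<\beta$ such that, with probability tending to $1$ as $\lambda\rightarrow\infty$, the proportion of leaf pairs whose root-paths overlap by a length of at least $\epsilon$ is at most $\delta$. I would simply take $\epsilon=\delta=\beta/3$. Because a Yule tree of height $t$ is ultrametric, the shared length $l_{xy}$ of the two root-paths equals the time-depth of the most recent common ancestor (MRCA) of $x$ and $y$; hence two leaves overlap by at least $\epsilon$ exactly when their lineages have not yet separated by time $\epsilon$, i.e. when they descend from a common lineage present at time $\epsilon$.

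Next I would decompose the tree at time $\epsilon$. Let $Z_s$ denote the number of lineages at time $s$ in a Yule process of rate $\lambda$ started from a single lineage, let $N$ be the number of lineages alive at time $\epsilon$, and let $n_1,\dots,n_N$ be the numbers of leaves descending from each of them, so $n=\sum_i n_i$ is the total leaf count. With this notation the proportion of overlapping pairs is exactly
$$R=\frac{\sum_{i=1}^{N}n_i(n_i-1)}{n(n-1)},$$
the chance that two leaves drawn without replacement share a time-$\epsilon$ ancestor. By the branching (Markov) property of the Yule process, conditionally on $N$ the block sizes $n_1,\dots,n_N$ are i.i.d. copies of $Z_\tau$ with $\tau:=t-\epsilon$, while $N$ is distributed as $Z_\epsilon$ and $n$ as $Z_t$; moreover $N$ is independent of the sequence $(n_i)$. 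The goal is thereby reduced to showing $R\rightarrow 0$ in probability.

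To control this ratio I would bound numerator and denominator separately. Recall that $Z_s$ is geometric with parameter $e^{-\lambda s}$, so $\PP\big(n>c\,e^{\lambda t}\big)=(1-e^{-\lambda t})^{\lceil c e^{\lambda t}\rceil}\rightarrow e^{-c}$; given any $\eta>0$, fix $c>0$ small enough that this limit exceeds $1-\eta$, and note that on $\{n>c\,e^{\lambda t}\}$ one has $n(n-1)\ge\tfrac12 c^2 e^{2\lambda t}$ for all large $\lambda$. For the numerator, Wald's identity together with the geometric factorial moments $\EE[Z_\tau(Z_\tau-1)]=2(1-e^{-\lambda\tau})e^{2\lambda\tau}$ gives
$$\EE\Big[\sum_{i=1}^{N}n_i(n_i-1)\Big]=\EE[N]\,\EE[Z_\tau(Z_\tau-1)]=e^{\lambda\epsilon}\cdot 2(1-e^{-\lambda\tau})e^{2\lambda\tau}\le 2\,e^{\lambda(2t-\epsilon)}.$$
Combining the two bounds through Markov's inequality yields
$$\PP(R>\delta)\le\PP\big(n\le c\,e^{\lambda t}\big)+\frac{\EE\left[\sum_i n_i(n_i-1)\right]}{\delta\cdot\tfrac12 c^2 e^{2\lambda t}}\le\PP\big(n\le c\,e^{\lambda t}\big)+\frac{4\,e^{-\lambda\epsilon}}{\delta c^2}.$$
Letting $\lambda\rightarrow\infty$, the second term vanishes and the first is at most $\eta$; since $\eta$ is arbitrary, $\PP(R>\delta)\rightarrow 0$, and the sufficient condition then gives the proposition.

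The conceptual heart, and the step I would be most careful with, is the ratio structure of $R$: since $n\,e^{-\lambda t}$ converges to a nondegenerate limit rather than concentrating, one cannot divide expectations directly, and it is essential to treat the numerator by a first-moment bound while securing a \emph{separate} high-probability lower bound on the denominator. Everything hinges on the single inequality $2t-\epsilon<2t$, which makes the mean of the numerator a factor $e^{-\lambda\epsilon}$ smaller than the natural scale $e^{2\lambda t}$ of the denominator; intuitively $R$ is of order $1/N\sim e^{-\lambda\epsilon}$ because by time $\epsilon$ the tree already carries exponentially many lineages. The remaining items (the geometric moment computations and the uniformity in $\lambda$ of the denominator bound) are routine.
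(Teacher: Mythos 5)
Your proof is correct, but it follows a genuinely different route from the paper's. Both arguments start from the same sufficient condition (find $\epsilon,\delta>0$ with $\epsilon+\delta<\beta$ so that the proportion of leaf pairs overlapping by at least $\epsilon$ is at most $\delta$ with high probability), but the paper then invokes a combinatorial result on Yule tree \emph{shape} --- Theorem 2(2) of McKenzie and Steel, which says the expected proportion of leaf pairs whose MRCA lies $r$ or more \emph{edges} from the root is $(2/3)^r$ --- chooses $r$ with $(2/3)^r\leq\delta$, and then argues that for large $\lambda$ all of the at most $2^r$ vertices within $r$ edges of the root lie within time $\epsilon$ of the root, so that only the $(2/3)^r$-fraction of pairs can overlap by $\epsilon$ or more. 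You instead cut the tree at \emph{time} $\epsilon$ and work directly with the branching structure: the overlap proportion is exactly $R=\sum_i n_i(n_i-1)/\bigl(n(n-1)\bigr)$, and you control it by Wald's identity plus the geometric factorial moments of Yule counts for the numerator, and a separate high-probability lower bound on $n$ for the denominator (correctly recognizing that $n e^{-\lambda t}$ does not concentrate, so one cannot simply take a ratio of expectations). What each approach buys: the paper's proof is shorter modulo the cited lemma and needs no moment computations, but it leans on an external tree-shape result, requires the reduction to an out-degree-2 root, and leaves the final expectation-to-probability step ("the result now follows") implicit; your proof is self-contained, handles the single root lineage without any reduction, and is more quantitative, exhibiting the explicit rate $\PP(R>\delta)\lesssim \PP\bigl(n\leq c e^{\lambda t}\bigr)+4e^{-\lambda\epsilon}/(\delta c^2)$, at the cost of the bookkeeping around the denominator event. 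The only blemishes are cosmetic: the statement that $N$ "is independent of the sequence $(n_i)$" should be phrased as the conditional i.i.d. property given $N$ (which is all Wald's identity needs), and the exponent in $\PP\bigl(n>c e^{\lambda t}\bigr)=(1-e^{-\lambda t})^{\lceil c e^{\lambda t}\rceil}$ is off by an immaterial unit; neither affects the argument.
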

\begin{proof}
We may assume that $T$ has a root of out-degree $2$ (the length of a single lineages shrinks to zero with probability $1$ as $\lambda$ grows).   By Theorem 2(2) of \cite{mck}, the expected proportion of pairs of leaves
whose most common ancestor lies $r$ or more edges from the root of the tree has the geometric probability $(2/3)^r$.  Given $\epsilon, \delta>0$  with $\epsilon+\delta <\beta$ first select a  sufficiently large value of $r$ that
$(2/3)^r \leq \delta$.  For any $\eta>0$ we can now select a  sufficient large value of $\lambda$ that the probability that all the (at most) $2^r$  vertices separated from the root by $r$ edges have are within distance $\epsilon$ from the root is
at least $1-\eta$. The result now follows.
\end{proof}

We now introduce some further notation.  For each state $j \in \S$, let $n_j$ denote the number of leaves of $T$ that are in state $j$, and let   $\rho_j = \rho^i_j$ be the expected proportion of leaves that are in state
$j$, given that the root is in state $i$.  Thus $n_j$ is a random variable (whose distribution depends on the root state $i$)  while $\rho_j$ is a value determined by the model parameters, $j$ and root state $i$.

The following Lemma is central to many of the results that follow in this paper.

\begin{lemma}
\label{centrallem}
Suppose that $T$ is a rooted tree, with branch lengths, and which  is $1-\beta$ spread.
  Then for any continuous-time Markov process on $T$,   the following holds for all initial states $i$:
For any $s>0$, the probability of the event that for all states $j \in \S$:
$$\left|\frac{n_j}{n}-\rho^i_j\right| <s$$
is at least $1- f(n, \beta)/s^2$ where $f(n,\beta)$ tends to $0$ as  $\max\{\frac{1}{n}, \beta\} \rightarrow 0$.
\end{lemma}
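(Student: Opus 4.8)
The plan is to prove this by a second-moment (Chebyshev) argument that reduces everything to a bound on the pairwise correlations between leaf states. Fix the initial (root) state $i$; all probabilities and expectations below are conditional on this. For each state $j \in \S$ and each leaf $x$, let $I_x^j$ be the indicator random variable equal to $1$ precisely when $x$ is in state $j$, so that $n_j = \sum_x I_x^j$ and, by the definition of $\rho^i_j$, $\EE[n_j/n] = \rho^i_j$. By Chebyshev's inequality, $\PP(|n_j/n - \rho^i_j| \geq s) \leq \mathrm{Var}(n_j)/(n^2 s^2)$, and a union bound over the finitely many states in $\S$ gives the event in the statement a failure probability of at most $\frac{1}{s^2}\sum_{j}\mathrm{Var}(n_j)/n^2$. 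It therefore suffices to show that $\mathrm{Var}(n_j)/n^2$ is bounded by a quantity that vanishes as $\max\{1/n,\beta\}\to 0$.

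Writing $\mathrm{Var}(n_j) = \sum_x \mathrm{Var}(I_x^j) + \sum_{x\neq y}\mathrm{Cov}(I_x^j, I_y^j)$, the diagonal terms are each at most $\frac14$ (a Bernoulli variance), contributing at most $n/4$ in total. The heart of the argument, and the step I expect to be the main obstacle, is bounding the off-diagonal covariances in terms of the shared path length $l_{xy}$. Here I would condition on the state $k$ at the most recent common ancestor $m$ of $x$ and $y$: by the Markov property on the tree, the two leaves evolve independently below $m$, so writing $a_k = p_{ik}(l_{xy})$ for the distribution at $m$ (note $l_{xy}$ is exactly the length of the root-to-$m$ path), $b_k = p_{kj}(d(m,x))$ and $c_k = p_{kj}(d(m,y))$, the covariance equals $\sum_k a_k b_k c_k - (\sum_k a_k b_k)(\sum_k a_k c_k)$. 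When $l_{xy}=0$ the ancestor $m$ is the root and $a_k = \delta_{ik}$, making this expression vanish; the task is to show it is $O(l_{xy})$ for small $l_{xy}$.

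To control the perturbation I would use the elementary bound $1 - p_{ii}(t) \leq q_i t$ (valid for any continuous-time chain, since $p_{ii}(t)\geq e^{-q_i t}$ with $q_i = -Q_{ii}$), which gives $\sum_k |a_k - \delta_{ik}| \leq 2 q_{\max} l_{xy}$, where $q_{\max}$ is the largest exit rate of $Q$. Substituting $a_k = \delta_{ik} + \epsilon_k$ into the covariance and using $0\leq b_k, c_k \leq 1$, every surviving term is controlled by $\sum_k|\epsilon_k|$, yielding $|\mathrm{Cov}(I_x^j, I_y^j)| \leq C\, l_{xy}$ for a constant $C$ depending only on $q_{\max}$. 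Combining this with the trivial bound $|\mathrm{Cov}|\leq \frac14$ (which takes over when $l_{xy}>1$) gives $|\mathrm{Cov}(I_x^j, I_y^j)| \leq C\min\{l_{xy},1\}$ after possibly enlarging $C$.

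Finally, summing over the off-diagonal pairs and invoking the definition of spread, $\sum_{x\neq y}\min\{l_{xy},1\} = n(n-1)\, s(T) \leq n(n-1)\beta$, so that $\mathrm{Var}(n_j) \leq \frac{n}{4} + C n(n-1)\beta$ and hence $\mathrm{Var}(n_j)/n^2 \leq \frac{1}{4n} + C\beta$. Plugging this into the Chebyshev/union bound above yields the claim with $f(n,\beta) = |\S|\left(\frac{1}{4n} + C\beta\right)$, which indeed tends to $0$ as $\max\{1/n,\beta\}\to 0$. The only genuinely delicate point is the covariance estimate; once the conditioning on the most recent common ancestor is set up correctly, the remaining steps are routine.
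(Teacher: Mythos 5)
Your proposal is correct and follows essentially the same route as the paper: indicator variables, the variance decomposition, a covariance bound of the form $K\min\{l_{xy},1\}$, the definition of spread, and then Chebyshev plus a union bound over the $|\S|$ states, arriving at the same $f(n,\beta)=|\S|\left(\frac{1}{4n}+K\beta\right)$. The only difference is in execution of the key covariance estimate: the paper conditions on the event of no substitution along the shared root-to-MRCA path and waves at ``routine algebra,'' whereas you condition on the MRCA state and run an explicit total-variation perturbation ($\sum_k|a_k-\delta_{ik}|\leq 2q_{\max}l_{xy}$), which is, if anything, a more rigorous writing of the same step.
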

\begin{proof}
 For $x \in X=\{1, \ldots, n\}$, let $\theta_x^j$ be the random variable that takes the value $1$ if leaf $x$ is in state $j$ and $0$ otherwise.  We have
 $\frac{n_j}{n} = \frac{1}{n}\sum_{i=1}^n \theta_x^j$ and
 $\rho_j = \frac{1}{n}\sum_{i=1}^np_{ij}(l_x).$
In particular, since $p_{ij}(l_x) = \EE[\theta_x^j]$, linearity of expectation gives:
$$\EE\left[\frac{n_j}{n}\right] = \rho_j.$$
Now:
\begin{equation}
{\rm Var}\left[\frac{n_j}{n}\right] = n^{-2}\left (\sum_{x \in X} {\rm Var}[\theta_x] + \sum_{x,y \in X, x \neq y}{\rm Cov}[\theta_x, \theta_y] \right),
\label{vareq}
\end{equation}
and ${\rm Var}[\theta_x]  \leq \frac{1}{4}$,  $|{\rm Cov}(\theta_x, \theta_y)| \leq 1$.
Moreover, for any pair $(x,y)$ we claim that  $|{\rm Cov}(\theta_x, \theta_y)| \leq K\min\{1, l_{xy}\}$ for a constant $K$ dependent only on the model.  To see this, let $N_{xy}$ be the event that the root ancestral state does not change state anywhere along the shared path of length $l_{xy}$. We have $\PP(N_{xy})= \exp(-cl_{xy})$ for a constant $c$ dependent only on the model. Moreover, the random variables
$\theta_x, \theta_y$ are conditionally independent, given $N_{xy}$.  Routine algebra then shows that we can express
${\rm Cov}[\theta_x, \theta_y]$ as  $l_{xy}$ times a constant, plus terms of order $l_{xy}^2$. However, since in addition
${\rm Cov}[\theta_x, \theta_y]\leq 1$,  we have ${\rm Cov}[\theta_x, \theta_y]\leq \min\{Kl_{xy}, 1\} \leq K \min\{l_{xy}, 1\}$ for some sufficiently large constant $K >1$.
Thus, from \ref{vareq}, we have:
$${\rm Var}\left[\frac{n_j}{n}\right] \leq  \left(\frac{1}{4n} + Ks(T) \right).$$
Let $f_1(n, \beta) = (\frac{1}{4n}+ K\beta)$ then,  since $s(T)<\beta$, and by Chebyshev's inequality, we have:
$$\PP\left(|\frac{n_j}{n}-\rho_j| \geq s \right) \leq \frac{{\rm Var}\left[\frac{n_j}{n}\right] }{s^2} \leq f_1(n, \beta)/s^2.$$
Thus, if $r = |\S|$ denotes the number of possible states, and if we let $f(n, \beta) :=  rf_1(n, \beta)$ then
we have:
$$\PP\left(\exists j:  |\frac{n_j}{n}-\rho_j| \geq s \right)  \leq f(n, \beta)/s^2,$$
which converges to zero when both $n \rightarrow \infty$ and
$\beta \rightarrow 0$.
\end{proof}

\begin{theorem}
For a conservative model and a $1-\beta$ spread tree, with $l_x \leq l$ for each $x$, the probability that the majority state at the leaves is identical to the ancestral state at the root is at least $1-g(n,\beta, l)$ for a function $g$ which (for each value $l$)  tends to zero as $\max\{\frac{1}{n}, \beta\} \rightarrow 0$.
\end{theorem}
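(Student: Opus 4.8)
The plan is to combine Lemma~\ref{centrallem} with the defining margin property of a conservative model. Write $i$ for the (unknown) true root state. The majority rule returns $i$ correctly precisely when $n_i > n_j$ for every $j \neq i$, so it suffices to show that this event holds with probability at least $1 - g(n,\beta,l)$. The strategy is to turn the pointwise conservative inequality into a \emph{uniform} gap between expected leaf frequencies, and then invoke the concentration statement of Lemma~\ref{centrallem}.

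First I would isolate a uniform separation constant. For each ordered pair $(i,j)$ with $j \neq i$, the map $t \mapsto p_{ii}(t) - p_{ij}(t)$ is continuous, and by the conservative hypothesis it is strictly positive for every $t \geq 0$. Restricting to the compact interval $[0,l]$, its minimum is attained and is therefore strictly positive; taking the minimum over the finitely many pairs $(i,j)$ yields a constant
$$\gamma(l) := \min_{i \in \S} \, \min_{j \neq i} \, \min_{0 \leq t \leq l} \bigl(p_{ii}(t) - p_{ij}(t)\bigr) > 0.$$
Since every leaf satisfies $l_x \leq l$, averaging over leaves gives, for each $j \neq i$,
$$\rho^i_i - \rho^i_j = \frac{1}{n}\sum_{x} \bigl(p_{ii}(l_x) - p_{ij}(l_x)\bigr) \geq \gamma(l),$$
so the expected frequency of the true state exceeds that of any competing state by at least the fixed amount $\gamma(l)$.

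Next I would apply Lemma~\ref{centrallem} with the threshold $s = \gamma(l)/2$. On the event that $\left|\frac{n_j}{n} - \rho^i_j\right| < \gamma(l)/2$ holds simultaneously for all $j \in \S$, we obtain for every $j \neq i$ that
$$\frac{n_i}{n} - \frac{n_j}{n} > (\rho^i_i - \rho^i_j) - \gamma(l) \geq 0,$$
so $i$ is the unique majority state and the rule is correct (in particular no ties arise on this event). By Lemma~\ref{centrallem} this event fails with probability at most $f(n,\beta)/s^2 = 4 f(n,\beta)/\gamma(l)^2$, so setting $g(n,\beta,l) := 4 f(n,\beta)/\gamma(l)^2$ gives the claimed bound. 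For each fixed $l$ the quantity $\gamma(l)$ is a positive constant, whence $g(n,\beta,l) \to 0$ exactly when $f(n,\beta) \to 0$, i.e.\ as $\max\{\tfrac{1}{n}, \beta\} \to 0$.

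The one step that genuinely needs care is the positivity of $\gamma(l)$, and this is precisely where the bound $l_x \leq l$ is indispensable. The conservative inequality only guarantees $p_{ii}(t) > p_{ij}(t)$ pointwise, and this margin can decay to $0$ as $t \to \infty$ (for instance the separation tends to $0$ under Jukes--Cantor, where all entries approach $\tfrac{1}{4}$). Without a uniform upper bound on root-to-leaf distances one could not extract a single positive $\gamma$, and the argument would collapse. Compactness of $[0,l]$ together with the finiteness of $\S$ are exactly what convert the pointwise conservative hypothesis into the uniform frequency margin that the concentration bound of Lemma~\ref{centrallem} can exploit.
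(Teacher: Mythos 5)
Your proposal is correct and follows essentially the same route as the paper's own proof: the paper defines the identical uniform separation constant $\delta_l := \min_{i,j:\, i \neq j} \inf\{p_{ii}(t)-p_{ij}(t): t \in [0,l]\}$ (positive by continuity, compactness of $[0,l]$, and the conservative property), deduces $\rho^i_i - \rho^i_j \geq \delta_l$, and applies Lemma~\ref{centrallem} with threshold $\frac{1}{2}\delta_l$ plus the triangle inequality to conclude with $g(n,\beta,l) = 4f(n,\beta)/\delta_l^2$. Your extra remarks (explicitly writing out the leaf-averaging step and noting why the bound $l_x \leq l$ is indispensable) are sound elaborations of the same argument rather than a departure from it.
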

\begin{proof}
Let $$\delta_l :=\min_{i,j: i \neq j} \inf\{p_{ii}(t)-p_{ij}(t): t \in [0,l]\}.$$
Since $p$ is continuous, and $[0,l]$ is compact, the conservative property implies that $\delta_l>0$.
Moreover,  we have:
\begin{equation}
\label{rhoav}
\rho^i_i - \rho^i_j \geq \delta_l \mbox{ for all } j \neq i.
\end{equation}
Now by Lemma~\ref{centrallem},  the probability of the event  that $\left|\frac{n_j}{n}-\rho^i_j\right| <\frac{1}{2}\delta_l$ for all $j$ is
at least $1-4f(n,\beta)/\delta_l^2$.  Moreover, for this event, Inequality (\ref{rhoav}) implies (by the triangle inequality) that
$\frac{n_i}{n} - \frac{n_j}{n} >0$ for all $j \neq i$; that is, the ancestral state $i$ is the majority state at the leaves.
Thus the probability that the ancestral state is the majority state is at least
$1-g(l, n, \beta)$ where $g(l,n,\beta) := 4f(n, \beta)/\delta_l^2$ has the required stated properties.
\end{proof}

\section{Case III:   Monotone time-reversible proceses}
\label{monsec}

Note that, for {\em any} general time-reversible (GTR) Markov process the function $p_{ii}(t)$  is always
monotone decreasing to its equilibrium frequency $\pi_i$ for each state $i$  \cite{ald}, that is:
$$p_{ii}(t) > p_{ii}(t') \mbox{ for all } t< t'.$$

We will say the model is  {\em monotone} if, for all distinct states $i,j$, we have:
 $$p_{ij}(t) < p_{ij}(t') \mbox{ for all } t< t'.$$  Thus a  monotone model has the property that
 if we start in a particular state $i$ then the probability that we are in a different particular state $j$ at time $t$ increases monotonically with $t$ towards  its equilibrium probability $\pi_j$.
 In particular, a monotone GTR model satisfies the `backward inequality'  from \cite{sob} that $p_{ii}(t) > p_{ji}(t)$ for all $j \neq i$ and $t \geq 0$, since $p_{ji}(t)$ is monotone increasing to $\pi_i$ while
 $p_{ii}(t)$ is montone descreasing to $\pi_i$.

For any number of states, models such as the Felsenstein 1981 model (also called the  F81, Tajima-Nei, or Equal Input model)  are monotone (but not conservative, unless all equilibrium frequencies are equal).  Also any  two--state Markov process is monotone (Fig.~\ref{fig:two-state}) and the  implications of this for biological inference on the basis of a single observation ($n=1$) were explored in \cite{sob} and \cite{sob2}.

 \begin{figure}[ht]
\includegraphics[width=0.6\textwidth] {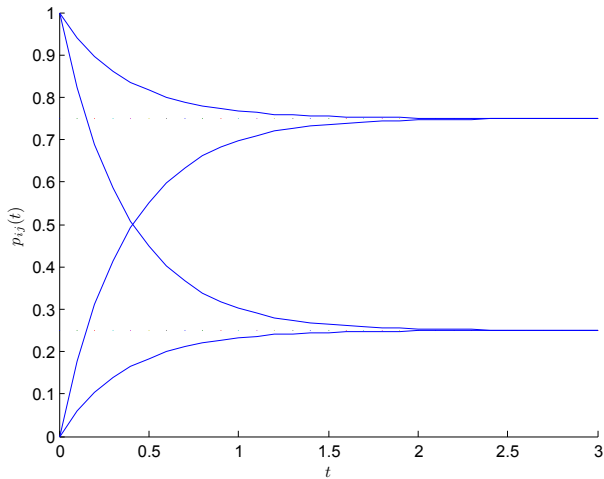}
\caption{A two-state model with different equilibrium frequencies ($0.75, 0.25$) for the states $0$ and $1$, respectively. The two decreasing curves are $p_{00}(t)$ (upper) and $p_{11}(t)$ (lower). The two increasing curves are $p_{10}(t)$ (upper) and $p_{01}(t)$ (lower).  This model is monotone, but not conservative.  }
\label{fig:two-state}
\end{figure}

Amongst nucleotide substitution models, the  K2P model is not monotone, since  if $i \neq j$
represents a transition then: $$p_{ij}(t) = \frac{1}{4}+\frac{1}{4}e^{-\mu t} - \frac{1}{2}e^{- \mu t(\frac{\kappa+1}{2})}$$ can behave as shown by the middle curve in Fig.~\ref{fig:mat2}, where $\kappa$ is the transition-transversion ratio (taken to be a
default option of $4$ here).

Despite K2P not being monotone, this model nevertheless satisfies Sober's `backward inequality' as it is a symmetric model (i.e. $p_{ij}(t) = p_{ji}(t)$ for all $t$); however more complex time-reversible continuous Markov processes can fail this inequality. For example, consider a process on states $0,1,2,\ldots, m$ with equal and high transition rates from each value of  $k$ (less than $m$)  to $k+1$ and equal  low transition rates from each $k$ (greater than $1$) to $k-1$. Then for a suitably large value of $m$ and choice of $t=t_1$, we have $p_{11}(t_1) < p_{01}(t_1)$.  In particular, observing state $1$ at a particular (known) time $t_1$ provides more evidence that the initial state was $0$ rather than $1$.

With monotone models, the majority rule can be misleading -- when the time $t$ is larger than the time corresponding to the intersection point  $p_{ii}(t)=p_{ij}(t)$,  it is  more likely to have  $j \neq i$
 at any given leaf than to have $i$. However, simple prediction rules still exist, which depend on what is known/unknown.

When the equilibrium frequencies are known, we use the fact that if $i$ is the ancestral state then the proportion of taxa in state $i$, $\frac{n_i}{n}$ is expected to be larger than  $\pi_i$  (at least if the number of taxa is sufficient to avoid sampling effects), while $\frac{n_j}{n}$ is expected to be less than $\pi_j$ for all $j \neq i$. This suggests a {\em modified majority rule}: select as an ancestral state estimate the state  $i$ which maximizes  $\frac{n_i}{n} - \pi_i$.  Note that the branch lengths and even the tree topology do not need to be known. Moreover, this decision rule becomes the simple majority rule when the equilibrium frequencies are all equal.

However, with site specific models, we cannot assume that the equilibrium frequencies are known, especially with proteins (as discussed earlier). In such a case, we can use a second decision rule, based on the fact that $p_{ii}(t)$   is a decreasing function of $t$  while $p_{ij}(t)$  is increasing. This second decision rule needs the root-to-leaf distances to be known and variable across taxa (however the tree topology may be unknown).  Let $\overline{l_j}$  be the average distance between the root and the taxa having state $j$, and let   $\overline{l_{-j}}$  be the average distance between the root and the taxa having a state different to $j$.  As the distance of a leaf from the root increases, the probability that leaf is in the ancestral state $i$ should also decrease, while the reverse trend should hold for any other state $j$. In other words, we select $i$  to minimize  $\overline{l_i}-\overline{l_{-i}}$.  Note that for this rule to apply, we need the root-to-taxon distance to be sufficiently heterogeneous. With a molecular clock-tree this rule is of no help. Moreover, we do not need to know the site rate and the absolute branch lengths, and the topology and branch lengths may be unknown, provided we still can estimate the root-to-leaf distances.

We shall see that under mild assumptions, both rules for monotone models are statistically consistent. We now describe the two procedures for monotone models more precisely, depending on whether $\pi$ is known or not. We then state a theorem that provides conditions under which
these estimators are accurate.

The two procedures are as follows:

\begin{itemize}
\item   {\bf $\pi$ known:}    Select the ancestral state $i$ to maximize $\frac{n_i}{n} - \pi_i$.
\item   {\bf $\pi$ not known:}  Select the ancestral state $i$ to minimize $\overline{l_i}- \overline{l_{-i}}$.
\end{itemize}

We will show that the first estimator performs well provided the tree is well spread and  $n$ is large.
 The second estimator requires, in addition, that there be reasonable spread amongst the  root-to leaf distances (i.e. that they be {\em not} clocklike).  First, we require a lemma which is
 a mild extension of Chebyshev's order inequality (the proof is given in the Appendix).

 \begin{lemma}
 \label{covlem}
 Suppose that $Y$ is a random variable taking values in $[0,l]$ and that  $f: [0,l] \rightarrow \RR$ is a smooth function with $f'(y) \geq  c>0$  for all $y \in [0, l]$. Then:
 $${\rm Cov}[Y, f(Y)]  \geq c \cdot {\rm Var}[Y].$$
 Similarly, if $f'(y) \leq -c < 0$  for all $y \in [0, l]$ then  ${\rm Cov}[Y, f(Y)]  \leq -c \cdot {\rm Var}[Y].$
 \end{lemma}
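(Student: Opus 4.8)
\section*{Proof proposal}

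The plan is to use the standard symmetrization (two independent copies) argument that underlies Chebyshev's correlation inequality. First I would introduce independent copies $Y_1, Y_2$ of $Y$ and record the identity
$$2\,{\rm Cov}[Y, f(Y)] = \EE[(Y_1 - Y_2)(f(Y_1) - f(Y_2))],$$
which follows by expanding the product on the right and using that $Y_1, Y_2$ are independent and identically distributed: the two diagonal terms each give $\EE[Y f(Y)]$, while the cross terms $\EE[Y_1 f(Y_2)]$ and $\EE[Y_2 f(Y_1)]$ each factor as $\EE[Y]\,\EE[f(Y)]$. In parallel I would record the companion identity $\EE[(Y_1 - Y_2)^2] = 2\,{\rm Var}[Y]$, obtained the same way.

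Next I would establish the pointwise bound. By the fundamental theorem of calculus, $f(Y_1) - f(Y_2) = \int_{Y_2}^{Y_1} f'(t)\,dt$, so the hypothesis $f'(y) \geq c$ forces $f(Y_1) - f(Y_2)$ to have the same sign as $Y_1 - Y_2$ and, after a short case split on whether $Y_1 \geq Y_2$ or $Y_1 < Y_2$, to satisfy
$$(Y_1 - Y_2)(f(Y_1) - f(Y_2)) \geq c\,(Y_1 - Y_2)^2$$
pointwise. Taking expectations of this inequality and substituting the two identities from the first step gives $2\,{\rm Cov}[Y, f(Y)] \geq 2c\,{\rm Var}[Y]$, which is the claimed bound after dividing by $2$.

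Finally, the second assertion (for $f'(y) \leq -c$) follows immediately by applying the first part to $g := -f$, whose derivative satisfies $g' \geq c > 0$. This yields ${\rm Cov}[Y, g(Y)] \geq c\,{\rm Var}[Y]$, and since ${\rm Cov}[Y, g(Y)] = -{\rm Cov}[Y, f(Y)]$ the stated reversed inequality follows.

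I do not anticipate a genuine obstacle here, as the argument is entirely elementary. The only point requiring care is the sign bookkeeping in the pointwise inequality: one must check the correct direction in both the case $Y_1 \geq Y_2$ (where $f$ increases by at least $c(Y_1 - Y_2)$, so the product is bounded below directly) and the case $Y_1 < Y_2$ (where multiplying the bound on $f(Y_1) - f(Y_2)$ by the negative factor $Y_1 - Y_2$ reverses the inequality into the desired form). The smoothness assumption is used only to invoke the fundamental theorem of calculus; a mean value theorem argument would serve equally well, and the boundedness of $Y$ to $[0,l]$ guarantees all the relevant moments and covariances are finite.
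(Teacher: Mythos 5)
Your proof is correct and takes essentially the same approach as the paper's: the paper's double sum $\sum_{i,j}(y_i-y_j)(f(y_i)-f(y_j))p(y_i)p(y_j)$ is exactly your symmetrization identity $2\,{\rm Cov}[Y,f(Y)]=\EE[(Y_1-Y_2)(f(Y_1)-f(Y_2))]$ written out for a discrete distribution, and both arguments then conclude via the same pointwise mean value theorem bound. The only minor difference is that your formulation with two independent copies covers arbitrary distributions on $[0,l]$ in one stroke, whereas the paper treats the discrete case explicitly and remarks that ``other cases are similar.''
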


\begin{theorem}
\label{ans}
Suppose we have a monotone GTR model and $\alpha>0$.
\begin{enumerate}
\item
The first estimation procedure described above (for a known $\pi$) correctly selects the true ancestral state with probability at least $1-\alpha$ provided the following three conditions hold:
\begin{itemize}
\item[(i)] $l_x \leq l < \infty$, for all $x$, and some $l$ independent of $n$;
\item[(ii)]  $T$ is $1-\beta$ spread for  sufficiently small values of $\beta$,  and
\item[(iii)] $n$ is sufficiently large.
\end{itemize}
\item
The second estimation procedure described above (for $\pi$ not known) correctly selects the true ancestral state with probability at least $1-\alpha$ provided that, in addition to conditions (i) -- (iii),  the  following two conditions hold:
\begin{itemize}
\item[(iv)] The variance of the $l_x$ values is greater or equal to some fixed value $v>0$ as $n$ grows.
\item[(v)] $\pi_j \in (0,1)$ for all $j \in \S$.
\end{itemize}
\end{enumerate}
\end{theorem}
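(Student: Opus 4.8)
The plan is to handle the two procedures separately, in each case first isolating a population-level quantity that distinguishes the true root state and then invoking (an extension of) Lemma~\ref{centrallem} to show that the empirical statistic the estimator uses concentrates on this population value once $\beta$ is small and $n$ is large. For Part~1 (procedure with $\pi$ known), monotonicity supplies a uniform gap: since $p_{ii}(t)$ decreases strictly to $\pi_i$ and $p_{ij}(t)$ increases strictly to $\pi_j$ for $j \neq i$, compactness of $[0,l]$ (condition~(i)) yields a constant $\gamma_l>0$ with $p_{ii}(t)-\pi_i \geq \gamma_l$ and $\pi_j - p_{ij}(t) \geq \gamma_l$ on $[0,l]$. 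Averaging over leaves gives $\rho^i_i - \pi_i \geq \gamma_l$ and $\pi_j - \rho^i_j \geq \gamma_l$ for the true root state $i$. Applying Lemma~\ref{centrallem} with $s=\gamma_l/2$, the event that $|\frac{n_j}{n}-\rho^i_j|<\gamma_l/2$ for all $j$ has probability at least $1-4f(n,\beta)/\gamma_l^2$, and on this event the triangle inequality forces $\frac{n_i}{n}-\pi_i > 0 > \frac{n_j}{n}-\pi_j$ for every $j \neq i$, so the estimator selects $i$; taking $\beta$ small~(ii) and $n$ large~(iii) pushes this probability above $1-\alpha$.

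For Part~2 (procedure with $\pi$ unknown) I would pass to the empirical distribution of root-to-leaf distances, writing $L$ for the distance of a uniformly chosen leaf and conditioning on root state $i$, so that the statistic the estimator minimizes is the empirical version of $\phi(j) := \EE[L\mid J=j]-\EE[L\mid J\neq j]$. A short computation rewrites this as
$$\phi(j) = \frac{{\rm Cov}[L,\, p_{ij}(L)]}{\rho^i_j(1-\rho^i_j)}.$$
Here Lemma~\ref{covlem} does the key work: $p_{ii}$ is strictly decreasing and $p_{ij}$ ($j\neq i$) strictly increasing on $[0,l]$, so by smoothness on a compact interval their derivatives are bounded away from $0$, giving ${\rm Cov}[L,p_{ii}(L)] \leq -c_l\,{\rm Var}[L]$ and ${\rm Cov}[L,p_{ij}(L)] \geq c'_l\,{\rm Var}[L]$. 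Condition~(iv) provides ${\rm Var}[L]\geq v>0$ and condition~(v) keeps each $\rho^i_j$ bounded away from $0$ and $1$, so $\phi(i)$ is at most a strictly negative constant while $\phi(j)>0$ for all $j\neq i$, with a separation independent of $n$. Thus at the population level the true state is the unique minimizer of $\phi$.

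It then remains to show the empirical statistic concentrates on $\phi$, and this is where I expect the real difficulty to lie. Unlike $\frac{n_j}{n}$, each $\overline{l_j}$ is a ratio of two random sums, so I would first extend Lemma~\ref{centrallem} to the distance-weighted indicators $\frac1n\sum_x l_x\theta^j_x$: since $l_x\leq l$ by~(i), the same well-spread covariance bound controls their variance, so both $\frac1n\sum_x l_x\theta^j_x$ and $\frac{n_j}{n}$ concentrate on their means. Because $\rho^i_j$ is bounded away from $0$ and $1$, the ratios $\overline{l_j}$ and $\overline{l_{-j}}$ then concentrate on $\EE[L\mid J=j]$ and $\EE[L\mid J\neq j]$, so the empirical ordering of the $\phi(j)$ matches the population one with high probability, and choosing $\beta$ small and $n$ large gives success probability at least $1-\alpha$. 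The delicate points are precisely the ones that make the ratio concentrate: guaranteeing that $\frac{n_j}{n}$ stays bounded away from $0$ uniformly in $n$ (this is where conditions~(iv) and~(v) must do their work, forcing every state to occupy a non-vanishing fraction of leaves) and confirming that the derivative bounds $c_l, c'_l$ are strictly positive across $[0,l]$.
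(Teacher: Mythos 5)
Your proposal is correct and follows essentially the same route as the paper: Part 1 is the paper's argument almost verbatim (monotonicity gives a uniform gap $\delta_l>0$ separating $\rho^i_i$ from $\pi_i$ and $\rho^i_j$ from $\pi_j$, and Lemma~\ref{centrallem} plus the triangle inequality finishes), and your Part 2 rests on exactly the paper's two ingredients: the identity expressing the difference of conditional mean distances as a covariance between leaf distance and state indicator divided by $\rho^i_j(1-\rho^i_j)$ (the population form of Eqn.~(\ref{difference})), and Lemma~\ref{covlem} applied to $Y=$ the distance of a uniformly chosen leaf with $f=p_{ij}$, using condition (iv) to bound that covariance away from zero. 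The only organizational difference is how Part 2 is closed: you concentrate the full ratios $\overline{l_j},\overline{l_{-j}}$ onto population conditional means, which is why you must keep $n_j/n$ (hence $\rho^i_j$) quantitatively bounded away from $0$ and $1$ --- precisely the difficulty you flag at the end. The paper sidesteps this: it never concentrates a ratio, but writes the numerator of $D_j$ as the population covariance $\overline{L}-\overline{l}\rho^i_j$ plus two fluctuation terms (each controlled by a Lemma~\ref{centrallem}-type bound for the weighted sums $\frac{1}{n}\sum_x l_x\theta_x^j$), and then notes that the denominator $\frac{n_j}{n}(1-\frac{n_j}{n})$ is automatically positive whenever $0<n_j<n$ (condition (v) only needs to ensure this event has probability tending to one), so only the sign of the numerator matters. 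This makes the paper's closing slightly lighter than yours: it needs concentration of linear statistics only, and no uniform lower bound on $\rho^i_j$; your version is still correct, but carrying out the ratio concentration rigorously would require the extra step of deducing a positive lower bound on $\rho^i_j$ from (iv), (v) and monotonicity, which the sign-of-numerator argument never has to supply.
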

\begin{proof}
For part (1), let $\delta_1 = \min_{i,j: i \neq j} \{\pi_j - p_{ij}(l)\}, \delta_2=\min_i \{p_{ii}(l)-\pi_i\}$ and $\delta_l = \min\{\delta_1, \delta_2\}$.
By the monotonicity property, we have $\delta_l >0$.
If $i$ is the ancestral state then:
\begin{equation}
\label{rhoav2}
 \rho^i_i \geq \pi_i + \delta_l,  \mbox{ and for any state $j \neq i$, }  \rho^i_j \leq \pi_j - \delta_l.
 \end{equation}
Now by Lemma~\ref{centrallem}, the probability of the event  that $\left|\frac{n_j}{n}-\rho^i_j\right| <\delta_l$ for all $j$ is
at least $1-f(n,\beta)/\delta_l^2$.  Moreover, for this event, Inequality (\ref{rhoav2}) implies (by the triangle inequality) that
$\frac{n_i}{n} -\pi_i >0$ and for all $j \neq i$, we have $\frac{n_j}{n} -\pi_j<0$, in which case the correct ancestral state ($i$) will be selected
by the decision rule.   Thus if we select a sufficiently small value of $\beta$ and a sufficiently large value of $n$ that $1-f(n, \beta)/\delta_l^2 < \alpha$ we obtain the result in Part (1).

For part (2), we show that if $i$ is the ancestral state then, with high probability, $\overline{l_i} - \overline{l_{-i}}<0$ and for all $j \neq i$, $\overline{l_j} - \overline{l_{-j}} >0$.
For any state $j$ (including $i$),  consider the difference:
$$D_j: = \overline{l_j} - \overline{l_{-j}}.$$
Recalling the definition of $\theta_x^j$ from the proof of Lemma~\ref{centrallem} we have:
$$\overline{l_j} = \frac{\sum_{x \in X}l_x\theta_x^j}{n_j} \mbox{ and } \overline{l_{-j}} = \frac{\sum_{x \in X}l_x(1-\theta_x^j)}{(n-n_j)},$$ and so:
\begin{equation}
\label{difference}
D_j = \frac{\frac{1}{n}\sum_{x\in X} l_x\theta_x^j - (\frac{1}{n}\sum_{x \in X}l_x) \cdot \frac{n_j}{n}}{\frac{n_j}{n}(1-\frac{n_j}{n})}.
\end{equation}
By assumption (v), $D_j$  is well defined (i.e. $n>n_j>0$ in the denominator) with probability converging to $1$ as $n$ grows.
Let $$\overline{l}: = \frac{1}{n}\sum_{x \in X} l_x, \mbox{ and let } \overline{L} := \frac{1}{n}\sum_{x\in X} l_xp_{ij}(l_x).$$
Notice that we can write the numerator of $D_j$ in the form:
\begin{equation}
\label{formeq}
(\overline{L} - \overline{l}\rho^i_j ) + \left (\frac{1}{n}\sum_{x\in X} l_x \theta_x^j - \overline{L}\right) + \overline{l}(\rho^i_j - \frac{n_j}{n}).
\end{equation}

Now, let $c_1 = \min_{j \neq i} \inf \{\frac{dp_{ij}(t)}{dt}: t \in [0,l]\}$ and $c_2 =  \inf \{\frac{-dp_{ii}(t)}{dt}: t \in [0,l]\}$, and
$c = \min\{c_1, c_2\}$.  By the monotone assumption, $c>0$.  We can now apply Lemma~\ref{covlem} as follows.  Define a random variable $Y$ by setting $Y = l_x$ for a leaf $x$ selected uniformly at random from  the leaf set $X$, and
let $f(y) = p_{ij}(y)$.  Then, ${\rm Cov}[Y, f(Y)] = \overline{L} - \overline{l}\rho^i_j$ and so, by Lemma~\ref{covlem}, we have:
\begin{equation}
\label{meanineq}
\overline{L} - \overline{l}\rho^i_i \leq -cv, \mbox{ and } \overline{L} - \overline{l}\rho^i_j \geq c v\mbox{ for all $j \neq i$},
\end{equation}
where $v>0$ is a lower bound on the variance of the $l_x$ values from condition (iv).
Note that $\EE[\frac{1}{n}\sum_{x\in X} l_x \theta_x^j] = \overline{L}$, and  since $l_x \leq l$ for all $x \in X$, an argument similar to that given in Lemma~\ref{centrallem} implies that $\left|\frac{1}{n}\sum_{x\in X} l_x \theta_x^j - \overline{L}\right|$ can be made less than any $\delta>0$ by selecting $\beta$ and $\frac{1}{n}$ sufficiently small. Moreover, the same applies for the difference $\left | \frac{n_j}{n} - \rho^i_j \right|$ by Lemma~\ref{centrallem}.
 Thus, from expression (\ref{formeq}), the numerator of $D_j$ can be made arbitrarily close to the difference $\overline{L} - \overline{l}\rho^i_j$ by selecting  $\beta$ and $\frac{1}{n}$ sufficiently small.
It then follows from  Inequality (\ref{meanineq}) that  the sign of $D_j$ will be negative for $j=i$ and positive otherwise, as required (noting that $c$ depends just on the model, not on $\beta$ or $n$).  This completes the proof.

\end{proof}

\section{Case IV:  Non-monotone and non-conservative models}

Some simple and widely used models are neither monotone nor conservative. For example, the `HKY' (Hasegawa, Kishino and Yano) model combines both K2P and F81 (\cite{swo}); as with K2P, the transition probabilities first increase and then decrease (non-monotony); because the equilibrium frequencies may be unequal, the probability of observing the root state $i$ at a leaf may be less than the probability of observing state $j$ when  $\pi_i < \pi_j$.

 \begin{figure}[ht]
\includegraphics[width=0.8\textwidth] {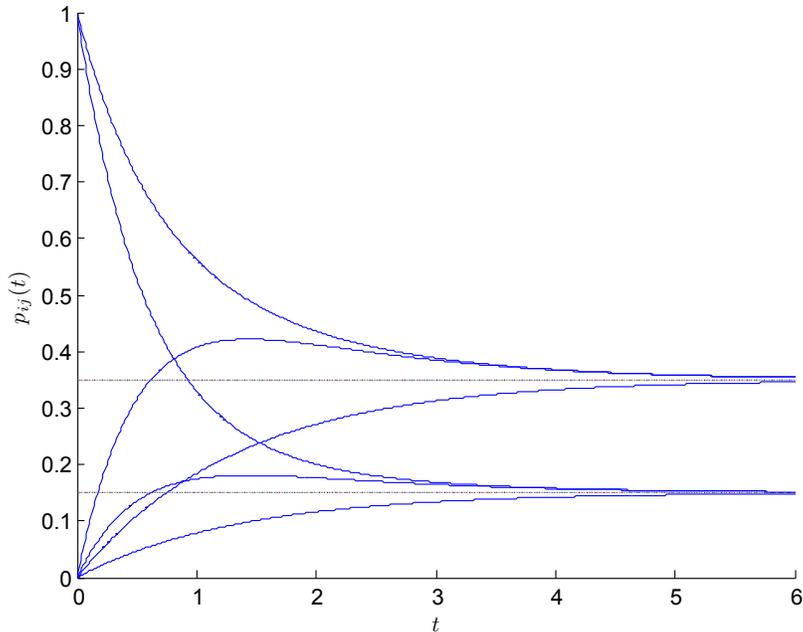}
\caption{HKY transition probabilities with standard parameter values ($\kappa = 4$,  purine=pyrimidine=0.5, GC = 70\%).  The asymptotic values $0.35$ and $0.15$ are the equilibrium GC and AT frequencies, respectively. The two decreasing curves are $p_{ii}(t)$ (e.g. G $\rightarrow$ G and C$\rightarrow$ C for the top-most curve). The two increasing curves with local maxima are for transitions (e.g. A $\rightarrow$ G, T $\rightarrow$ C for the top-most increasing curve) while the two monotone increasing curves are for transversions.}
\label{fig:mat3b}
\end{figure}

With such models, the justifications provided for the statistical consistency of the  three simple rules above and parsimony no longer apply. However, when the model is fully known and the tree is clock-like, the ancestral state can still be estimated using the frequencies of the characters at the tree leaves. We  shall see that this method is statistically consistent.

We first state a general result concerning general Markov processes.
\begin{lemma}
\label{ident}
Consider any continuous-time, irreducible Markov process, and let $X_t$ be the state at time $t$. Then
for any given $t \geq 0$, the probability distribution on $X_t$ determines both $X_0$ and $t$. That is:
 $$\PP_i(X_t =j) = \PP_{i'}(X_{t'} = j) \mbox{ for all } j \in \S \Rightarrow i=i', t=t'.$$
\end{lemma}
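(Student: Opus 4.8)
The plan is to exploit the semigroup structure of the transition matrices together with a single robust fact about irreducible chains. Since the process is time-homogeneous with a rate matrix $Q$ on the finite state set $\S$, the transition matrices form the one-parameter semigroup $P(t) = e^{Qt}$, where $p_{ij}(t) = \PP_i(X_t = j)$ is the $(i,j)$ entry. The distribution of $X_t$ started from $i$ is then exactly the $i$-th row $\mathbf{e}_i P(t)$, so the hypothesis $\PP_i(X_t=j)=\PP_{i'}(X_{t'}=j)$ for all $j$ is the vector identity
\[
\mathbf{e}_i P(t) = \mathbf{e}_{i'} P(t').
\]
The key structural observation I would use is that every matrix exponential is invertible, with $P(t)^{-1} = e^{-Qt}$. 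Assuming without loss of generality (by the symmetry of the claim) that $t' \geq t$, and writing $\tau = t'-t \geq 0$, I would multiply both sides on the right by $P(t)^{-1}$; since powers of $Q$ commute this collapses the identity to
\[
\mathbf{e}_i = \mathbf{e}_{i'} e^{Q\tau} = \mathbf{e}_{i'} P(\tau),
\]
that is, $p_{i'j}(\tau) = \delta_{ij}$ for every $j \in \S$, where $P(\tau)$ is a genuine stochastic matrix because $\tau \geq 0$.

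From here I would branch on whether $\tau = 0$. If $\tau = 0$ then $P(\tau) = I$, so $\mathbf{e}_i = \mathbf{e}_{i'}$ forces $i = i'$, and $\tau = 0$ is precisely $t = t'$; both conclusions follow immediately. The case $\tau > 0$ I would eliminate by contradiction, and this is where the hypotheses actually bite: for an \emph{irreducible} continuous-time Markov process on a finite state space one has $p_{kl}(\tau) > 0$ for every pair $k,l$ and every $\tau > 0$. Granting $|\S| \geq 2$, choose any $j \neq i$; strict positivity gives $p_{i'j}(\tau) > 0$, which contradicts the equality $p_{i'j}(\tau) = \delta_{ij} = 0$ derived above. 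Hence $\tau > 0$ is impossible and we fall back into the $\tau = 0$ case, yielding $i=i'$ and $t=t'$.

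The main obstacle, and the only genuinely non-trivial ingredient, is the strict-positivity fact: the invertibility of $e^{Qt}$ is automatic, but the contradiction at $\tau>0$ rests entirely on irreducibility (to guarantee that no off-diagonal transition probability can be pinned to zero at a positive time) together with finiteness of $\S$ (so that $P(t)$ is an honest matrix one may invert). I would justify it by the classical Lévy dichotomy for CTMC transition functions, namely that each $p_{kl}(\cdot)$ either vanishes identically on $(0,\infty)$ or is strictly positive throughout, with irreducibility ruling out the former for all pairs. Finally I would add a one-line remark on the degenerate case $|\S| = 1$: there the law of $X_t$ is the same point mass for every $t$, so $t$ is not identifiable, and the lemma is to be read as concerning a process with at least two states, which is implicit in the substitution-model setting.
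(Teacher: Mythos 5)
Your proof is correct and takes essentially the same route as the paper: both reduce the hypothesis $\mathbf{e}_i\exp(Qt)=\mathbf{e}_{i'}\exp(Qt')$ (after a WLOG ordering of $t,t'$) to $\mathbf{e}_{i'}\exp(Q\tau)=\mathbf{e}_i$ using the invertibility of the matrix exponential, and both kill the case $\tau>0$ by the fact that an irreducible chain has strictly positive transition probabilities at any positive time, so no row of $\exp(Q\tau)$ can be a standard basis vector. The only cosmetic difference is that you cancel $\exp(Qt)$ by its explicit inverse $e^{-Qt}$, while the paper invokes nonsingularity via Jacobi's identity $\det\exp(Qt')=\exp(\mathrm{tr}(Q)t')>0$; your write-up is in fact a little more careful, since you justify the positivity step (L\'evy dichotomy) and flag the degenerate one-state case, both of which the paper leaves implicit.
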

\begin{proof}
Let ${\bf e}_{i}$ be the vector that has  $1$ in position $i$ and $0$ otherwise, and define ${\bf e}_{i'}$ analogously. Now, the vector
$p^{i}(t):=[\PP_i(X_t =j): j \in \S]$ satisfies $p^{i}(t) = {\bf e}_{i}\exp(Qt)$; similarly we have
$p^{i'}(t') = {\bf e}_{i'}\exp(Qt')$.  Suppose values of $t, t'$ exist for which $p^{i}(t) = p^{i'}(t')$. Without loss of generality, we may suppose that $t\geq t'$.  In this case we have:
$$({\bf e}_{i}\exp(Q(t-t')) - {\bf e}_{i'})\exp(Qt') = 0.$$
Moreover, since the process is irreducible, ${\bf e}_{i}\exp(Q(t-t'))$ can equal ${\bf e}_{i'}$ only if  $t=t'$ and $i = i'$, so if this is not the case, we have  ${\bf w}\exp(Qt') = 0$ for a non-zero vector ${\bf w}$ which implies that
 $$\det\exp(Qt') = 0.$$ But, by Jacobi's identity, $\det\exp(Qt') = \exp(tr(Q)t') >0$.  This completes the proof.
\end{proof}

From this Lemma, it follows that for the very special case of a star tree with all edges of equal length we can use maximum likelihood to consistently infer the ancestral state.  This is because, in this very special case,  the states at the $n$ leaves provide $n$ i.i.d. samples of the process, and so the identifiability conditions required to estimate $s_0$ and $\mu$ hold (for similar reasons to
the tailored argument for the consistency of MLE in settings such as phylogenetic tree reconstruction, described in Lemma 5.1 of \cite{cha}).
Moving from star trees to the more general class of well-spread trees, we have the following main result of this section:

\begin{theorem}
\label{nice}
Suppose we have a  continuous-time irreducible Markov process with rate matrix $Q$ given, and let $\alpha>0$.
Consider a rooted phylogenetic tree on $n$ leaves, for which the branch lengths $l_e$ satisfy a molecular clock, i.e. $l_x = l_0$ for all $x$, where $l_0$ is  less than some
known value $l$. Assume also that the tree is $1-\beta$ spread.
Then we can  estimate the ancestral state $s_0$ correctly with probability at least $1-\alpha$ provided that $n$ is sufficiently large, and $\beta$ is sufficiently small.
\end{theorem}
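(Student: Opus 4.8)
The plan is to construct a frequency-matching estimator and prove it succeeds with high probability, combining the identifiability from Lemma~\ref{ident} with the concentration of empirical leaf-state frequencies from Lemma~\ref{centrallem}. Because the model $Q$ is fully known and the tree is clocklike, under root state $i$ each leaf is marginally in state $j$ with probability $p_{ij}(l_0)$, so the expected frequency vector is $\rho^i_j = p_{ij}(l_0)$; the only unknowns are the root state $s_0$ and the common height $l_0 \in [0,l]$ (the topology and individual branch lengths play no role). For an observed frequency vector $(n_j/n)_{j \in \S}$ I would define the discrepancy $\Delta(i,t) := \max_{j \in \S} |n_j/n - p_{ij}(t)|$ and estimate the root state by $\hat{i} := \operatorname{argmin}_{i \in \S} \min_{t \in [0,l]} \Delta(i,t)$; the inner minimum is attained since $t \mapsto p_{ij}(t)$ is continuous and $[0,l]$ is compact. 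This is the natural analogue, for correlated leaves, of the maximum-likelihood argument valid in the i.i.d. star-tree case mentioned above.

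The first step is to establish a uniform separation constant. For distinct states $i \neq i'$ and any $t,t' \in [0,l]$, Lemma~\ref{ident} guarantees that the distribution vectors satisfy $p^i(t) \neq p^{i'}(t')$, since the pair $(i,t)$ is recoverable from the distribution on $X_t$; hence $\max_j |p_{ij}(t) - p_{i'j}(t')| > 0$. As this quantity is continuous in $(t,t')$ on the compact set $[0,l]^2$ and $\S$ is finite, the minimum $\epsilon := \min_{i \neq i'} \min_{t,t' \in [0,l]} \max_j |p_{ij}(t) - p_{i'j}(t')|$ is strictly positive. Crucially, $\epsilon$ depends only on $Q$ and $l$, and not on $n$ or $\beta$.

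Next I would apply Lemma~\ref{centrallem} with $s = \epsilon/3$: with probability at least $1 - 9 f(n,\beta)/\epsilon^2$, we have $|n_j/n - p_{s_0,j}(l_0)| < \epsilon/3$ for all $j$, using $\rho^{s_0}_j = p_{s_0,j}(l_0)$ from the molecular-clock assumption. On this event $\Delta(s_0,l_0) < \epsilon/3$, whereas for every $i' \neq s_0$ and every $t' \in [0,l]$, picking an index $j^\ast$ attaining $\max_j |p_{s_0,j}(l_0) - p_{i'j}(t')| \geq \epsilon$ and applying the triangle inequality gives $\Delta(i',t') \geq |n_{j^\ast}/n - p_{i'j^\ast}(t')| > \epsilon - \epsilon/3 = 2\epsilon/3$. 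Therefore $\min_{t} \Delta(s_0,t) < \min_{t'} \Delta(i',t')$ for all $i' \neq s_0$, so the estimator returns $\hat{i} = s_0$. Since $\epsilon$ is a fixed positive constant and $f(n,\beta) \to 0$ as $\max\{1/n,\beta\} \to 0$, the success probability $1 - 9f(n,\beta)/\epsilon^2$ exceeds $1-\alpha$ once $n$ is sufficiently large and $\beta$ sufficiently small, which is the claim.

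I expect the main obstacle to be the separation step. The difficulty is that a wrong root state $i'$ could in principle mimic the true frequency vector $p^{s_0}(l_0)$ at some \emph{other} time $t'$, and one must rule this out uniformly over the whole continuum of competing times. This is precisely where Lemma~\ref{ident} (identifiability of the pair $(i,t)$ from the state distribution) is indispensable, and where the compactness of $[0,l]$ upgrades pointwise positivity into a strictly positive lower bound $\epsilon$ independent of $n$ and $\beta$; everything downstream is then a routine concentration-plus-triangle-inequality argument.
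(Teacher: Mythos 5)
Your proposal is correct and takes essentially the same route as the paper's own proof: the identical frequency-matching estimator minimizing $\inf_{t \in [0,l]} \max_j |n_j/n - p_{ij}(t)|$, the same separation constant obtained from Lemma~\ref{ident} plus compactness of $[0,l]$ and continuity of $p$, and the same combination of Lemma~\ref{centrallem} (using $\rho^{s_0}_j = p_{s_0 j}(l_0)$ from the clock assumption) with a triangle-inequality argument. The only differences are cosmetic: you use radius $\epsilon/3$ where the paper uses $\delta_l/2$, yielding the constant $9$ rather than $4$ in the probability bound.
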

\begin{proof}
We will establish this result by a procedure that selects the state $i$  for which the entire probability distribution $p_{ij}(t)$ (as $j$ varies) can be made the `closest' to the empirical distribution $\frac{n_j}{n}$ for an optimal value of $t$. We will use the $l_\infty$ metric to measure `closeness' (although, in applications other metrics may be preferable) so we will select the ancestral estimate $i$ if $i$ minimizes the
quantity: $$\inf_{t \in [0,l]} \max_{j} \left|\frac{n_j}{n}-p_{ij}(t)\right|.$$
First observe that, for any two states $i,' \in \S$ with $i \neq i'$, if we let:
$$\Delta_{ii'}: = \inf_{t,t' \in [0,l]} \max_{j \in \S} \{|p_{ij}(t)-p_{i'j}(t')| \},$$ then $\Delta_{ii'}>0$ by Lemma \ref{ident}, the compactness of $[0,l]$ and the continuity of $p$.  Thus
$\delta_l := \min_{i,i': i \neq i'} \Delta_{i,i'}$  is also strictly greater than zero.  Notice that $\delta_l$ is independent of $n$.
Suppose that $i$ is the true ancestral state and $i'$ is a different state.  By the molecular clock assumption, $\rho^i_j = p_{ij}(l_0)$. Thus,
by Lemma~\ref{centrallem}, the probability of the event  that $\left|\frac{n_j}{n}-p_{ij}(l_0)\right| <\frac{1}{2}\delta_l$ for all $j$ is
at least $1-4f(n,\beta)/\delta_l^2$.  Moreover, for this event:
$$\inf_{t \in [0,l]} \max_{j} \left|\frac{n_j}{n}-p_{ij}(t)\right| < \inf_{t' \in [0,l]}\max_{j} \left|\frac{n_j}{n}-p_{i'j}(t') \right|$$
since the left-hand side is less than $\frac{1}{2}\delta_l$ and if $t'$ is the value that minimizes the right-hand side then, by the triangle inequality for the $l_\infty$ metric:
$$\max_{j} \left|\frac{n_j}{n}-p_{i'j}(t') \right| \geq \max_j \left|p_{ij}(t)-p_{i'j}(t') \right| - \max_j\left|\frac{n_j}{n}-p_{ij}(t) \right| \geq \Delta_{ii'}- \frac{1}{2}\delta_l \geq \frac{1}{2}\delta_l.$$
Thus, the selection method will choose the correct ancestral state ($i$) with probability at least $1-4f(n,\beta)/\delta_l^2$ and, as before, this can be larger than $1-\alpha$ by ensuring
that $\frac{1}{n}$ and $\beta$ are sufficiently small.
\end{proof}

When the tree is non-clock like,  and the model and branch lengths are known,  we might use a standard ML approach based on the pruning algorithm, though the precise conditions required for statistical consistency seem less clear.

\section{Simulations}

To compare the convergence rate and the performance of the various ancestral character reconstruction methods discussed in the previous sections, we performed computer simulations under biologically realistic conditions similar to  \cite{des}.

We first generated a Yule tree with $n =$  25, 50, 100, 200, 400, 800 and 1600 leaves. This molecular-clock tree was then perturbed by multiplying every branch length (independently) by $(1 + X)$, where $X$ was an exponential variable with parameter 0.5. The factor $(1 + X)$ was used (as opposed to, say, $X$) to avoid an excessive number of very small branches. The observed departure from the molecular clock, as measured by the ratio between the longest and shortest root-to-leaf lineages, was equal to $\approx3.5$ on average, a value that is usual in published phylogenies. Finally, the whole tree was re-scaled so that the average root-to-leaf distance was uniformly distributed between 0.1 (relatively low divergence) and 1.0 (high divergence).

DNA-like sequences of 100 sites were evolved along this tree using the HKY model with $\kappa=4.0$ (default value in most software) and the equilibrium frequencies of A, C, G and T being equal to 0.15, 0.35, 0.35 and 0.15, respectively (such GC bias is observed in thermophilic bacteria and archaea, while {\em Plasmodium} species have an even stronger AT bias). The same parameter values are used in Fig. \ref{fig:mat3b}. This HKY model was combined with a discrete gamma distribution of parameter 1.0 with six rate categories. We generated 500 data sets under these settings for each tree size $n$.

Five ancestral character prediction methods were compared:
\begin{itemize}
\item
`Parsimony'  (studied in Section~\ref{parsec});
\item
`Majority' (studied in Section~\ref{consec});
\item
`Modified majority', when the equilibrium frequencies $\pi$ are known (studied in Section~\ref{monsec}, {\em cf.}  part (1) of Theorem~\ref{ans});
\item
`Difference of average root-to-leaf distances',  when the equilibrium frequencies $\pi$ are unknown, but we know the root-to-leaf distances (studied in Section 4, {\em cf.} part (2) of Theorem~\ref{ans});
\item
`Presence',  which involves drawing with equal probability one of the characters that are present at the tree leaves. Indeed, it frequently occurs (notably with small $n$) that not all four possible characters are observed at the tree leaves. Moreover, all previous prediction methods never output a character that is not seen at the tree leaves. This implies that the difficultly of the prediction problem depends on the number of extant character states, and thereby depends on $n$. In the extreme case where we observe a unique extant character, all methods achieve perfect predictions (unless hidden convergent substitutions), while when the four characters are observed the chance is $1/4$ to be correct by chance. `Presence' is thus used to re-scale the performance of the various methods, depending on $n$ and the hardness of the prediction problem.
\end{itemize}

All methods were run with perfect knowledge of the tree topology (`Parsimony'), equilibrium frequencies $\pi$ (`Modified majority') or root-to-leaf distances (`Difference of average root-to-leaf distances'). For each method and each data set, we measured:
\begin{itemize}
\item
The percentage of correct predictions;
\item
The rescaled percentage of correct predictions, using the results achieved by `Presence'.  Let $P$ be the percentage of correct predictions of the given method, and $R$ be the percentage of correct predictions of `Presence'; the rescaled percentage of correct predictions is equal to $(P-R)/(1-R)$ and measures the fraction of improvement brought by the given method compared to random predictions.
\end{itemize}

Results averaged over 500 data sets are reported in Table 1 for each tree size $n$. We see that:
\begin{itemize}
\item
The results of `Presence'  indicates that the hardness of the prediction problem increases when $n$ increases; with $n=25$ the number of extant characters is around two on average, while it is around four with $n=1600$, meaning that the problem is `twice as hard' with $n=1600$ as compared with $n=25$.

\item
The accuracy of all methods improves with large $n$. However, the rescaled percentage of correct predictions is required to see this effect with `Difference of average root-to-leaf distances',  which is the method with the slowest convergence rate.
\item
Surprisingly,  `Parsimony' is slightly behind `Majority' and `Modified majority'. This finding is also observed with JC69 symmetrical model (results not shown), and thus cannot be attributed to the chosen substitution model (HKY); it is likely due to the fact that some of the simulated trees show a high divergence, a condition where `Parsimony' tends to perform poorly (see Theorem 2.3).

\item
Both `Majority' and `Modified majority' are very close, while we expected the latter to be better because it makes use of the equilibrium frequencies $\pi$. The explanation is likely related to the fact that in our simulations the root-to-leaf distance is less than 1.0 in average, a condition where HKY is basically conservative ({\em cf.} Fig. \ref{fig:mat3b}) and thus `Majority'   is consistent. However, we see a small superiority of `Modified majority' with large $n$, when the estimations of the $n_i/n$ frequencies become sufficiently reliable.  Moreover, HKY is monotone up to $\approx1.45$ while it is conservative up to $\approx0.8$ only.

\item
Finally, the performance of `Difference of average root-to-leaf distances'  is rather low, but there is a clear improvement with large $n$. This confirms that root-to-leaf distances bring substantial information, which could be combined with other standard approaches to enhance accuracy in difficult cases.
\end{itemize}

\begin{table}[ht]
\centering
\caption{Average accuracy with simulated data.   For each method we provide the percentage of correct predictions and (within parentheses) the rescaled percentage of correct predictions (see text for definition). 500 data sets with 100 sites each were used for each number of taxa ($n$). Abbreviations are Mod. Majority: `Modified  majority',  Diff. Aver. Dist.: `Difference of average root-to-leaf distances'. }
\label{table2}
\begin{tabular}{lccccc}
$n$ & Parsimony & Marjority & Mod. Majority & Diff. Aver. Dist.  &   Presence   \\ \hline
 25 & 0.820 (0.652)	& 0.832 (0.674)	 & 0.824 (0.659)	& 0.609 (0.214)	 & 0.499	 \\ \hline
 50 & 0.841 (0.728) & 	0.852 (0.746) & 	0.846 (0.736)& 	0.570 (0.237)	&  0.433  \\ \hline
 100 & 0.853 (0.772)	&  0.863 (0.788) & 	0.860 (0.784)	&  0.539 (0.262) & 	0.371 \\  \hline
 200 & 0.864 (0.802) & 	0.870 (0.811) & 	0.871 (0.813)	&  0.521 (0.285) & 	0.326 \\  \hline
 400 & 0.873 (0.822)  & 	0.880 (0.833)	 & 0.886 (0.842)	&  0.522 (0.324) &   0.289  \\  \hline
  800 & 0.885 (0.844) & 	0.885 (0.844) & 	0.896 (0.858)	&  0.537 (0.362)  & 0.270 \\  \hline
  1600 & 0.890 (0.852)	 & 0.891 (0.853) & 	0.906 (0.873)	&  0.567 (0.410) &  0.261  \\  \hline
\end{tabular}
\end{table}

All together, the most surprising outcome of these simulations is the performance of the (very simple) `Majority'  approach. It must be emphasized that `Majority' does not use any additional knowledge (tree topology, root-to-leaf distances or equilibrium frequencies), meaning that the gap could be larger if the other methods (e.g. `Parsimony') were used with only approximate knowledge (e.g. tree topology).

\section{Discussion}

In this paper, we have described and analysed five approaches for inferring ancestral root state in taxon-rich trees: maximum parsimony, simple majority rule, modified majority rule, root-to-leaf differences, and best-fit of
expected distribution of leaf states to the empirical distribution.  The methods are all relatively simple and easily implemented, and require different model (and tree) assumptions in order to justify their accuracy.   They can be applied in settings where one does not have enough information to carry out a full maximum likelihood analysis using the usual pruning algorithm, and so may be more suitable for site-specific models, where the process of evolution is likely to vary in a partially unknown way from character to character.

The price one might expect to pay for a method that requires fewer
assumptions or detailed knowledge of underlying parameters is lower accuracy.   Nevertheless, we have described several results which show that these methods (particular to the type of model in question) can still return the correct
ancestral state provided that the number of taxa ($n$) is sufficiently large, and the tree is sufficiently well-spread.   We have shown that for Yule trees with a high speciation rate (as a token for high taxon coverage), we expect a tree of fixed height to become increasingly well-spread as $n$ grows.  It is clear that some type of assumption  on the spread of the tree is necessary to avoid having two long branches near the root and the majority of  lineage splitting
well away from the root, in which case accurate root state inference is not possible.

Except for maximum parsimony, the methods described do not use the tree topology explicitly (only the distribution of states at the leaves, and perhaps their distance from the root are employed)  and so may be more robust to tree mis-specification.  Of the class of models described monotone models are perhaps the most relevant for application, since most GTR models are likely to be monotone (and even conservative) when restricted to amounts of evolutionary change that are commonly encountered for sequence evolution.

Our choice of methods to study  in this paper has been guided by what can be usefully analysed, and we are not advocating these methods above others that might be considered; in particular, we make no claim that they are `best possible'. Indeed, if one has sufficient information then more standard approaches such as maximum likelihood would be preferable. However, the simplicity of these methods, and the  fact that they are relatively robust to model mis-specification may make them a useful complement to more sophisticated approaches. It is also possible to develop statistical tests to determine whether differences observed in the data by our approaches are significant or not.
For future studies, it would be worthwhile to explore the performance of these approaches on biological data-sets, comparing them with other alternative approaches that have been advocated; however.

\section{Acknowledgments}
We thank David Aldous for suggesting the example of a reversible Markov process that violates Sober's `backward inequality'.

\newpage

\section{Appendix: Proof of Proposition~\ref{yuleprop} and Lemma~\ref{covlem}}

For the proof of Proposition \ref{yuleprop}  let $n = e^{\lambda t/2}$. Then the expected number of taxa at time $t/2$ is $n$ and is $n^2$ at time $t$. Let $N_u$ be the number of individuals at time $u$.  Let $E_1$ be the event that
$N_{\frac{t}{2}}$ lies between $\frac{1}{2}n$ and $\frac{3}{2}n$,  let $E_2$ be the event that $N_t  < 2n^2$, and let $E$ be the conjunction of $E_1, E_2$. We first establish the following:

CLAIM:  For some $\delta>0$,  $\PP(E)\geq \delta$, for all sufficiently large $\lambda$.

We have $\PP(E) = \PP(E_2|E_1)\cdot \PP(E_1).$ Now, the fact that $N_{t/2}/e^{\lambda t/2}$ has a limiting distribution as $\lambda$ tends to infinity (an exponential distribution with a mean of $1$)
implies that $\PP(E_1) \geq \delta'>0$ for a fixed $\delta'>0$ (we can take for $\delta'$ any number smaller than $e^{-\frac{1}{2}} - e^{-\frac{3}{2}}$ for large enough values of $\lambda$).
Moreover, $\EE[N_t|E_1] \leq \frac{3}{2}n^2$, since
$\EE[N_t|N_{t/2} = k] = ke^{\lambda t/2} = kn  \leq \frac{3}{2}n^2$ for any $k \leq 3n/2$.  However:
$$\EE[N_t|E_1] \geq 2n^2\cdot \PP(N_t \geq 2n^2|E_1) = 2n^2(1-\PP(E_2|E_1)).$$  Thus, $\PP(E_2|E_1) \geq \frac{1}{4}$, and so,
$\PP(E) \geq \frac{1}{4}\delta' = :\delta>0$ as claimed.

 Suppose the number of individuals at time $t/2$ is $m$;  label them $1, 2, \ldots, m$.  For individual $i$, let $n_i$ be the number of descendants at time $t$.
 Thus $\sum_{i=1}^m n_i$ is the total number of individuals at time $t$. Now we use a well-known property of the (discrete) Yule distribution -- for a binary tree with $n_i$ leaves, the probability that the root is incident with a leaf is exactly $2/n_i$. Now individual $i \in \{1, \ldots, m\}$ is not the root of a binary tree, but if the binary tree below $i$ has the property just described, then either the edge $i$ lies on, or an edge in the binary tree below it, has a length of at least $t/4$.  Also if $n_i \leq 2$ then once again we must have at least one edge with a length of at least $t/4$.

For any particular value of $m$ that satisfies event $E_1$, let $p$ be the probability that none of the $m$ individuals gives rise in this way to an edge of length at least $t/4$. Then $p$ is bounded above  (by independence) as follows:
 \begin{equation}
 \label{pleq}
p \leq \prod_{i=1}^m (1-\frac{2}{n_i}),
\end{equation}
where the $n_i$ values satisfy constraints implied by $E$: $$\sum_{i=1}^m n_i \leq 2n^2, \mbox{ and } m \geq \frac{1}{2}n,$$
as well as our assumption $n_i \geq 2$ for all $i$.
Maximizing the term on the right-hand side of (\ref{pleq}) subject to the constraint $\sum_{i=1}^m n_i \leq 2n^2$,   we have:
$$p \leq (1-\frac{2m}{2n^2})^m \sim  e^{-m^2/n^2} \leq e^{-0.25}.$$
Thus, with probability at least $\delta(1 - e^{-0.25})$ there is an edge in the Yule tree having length  at least $t/4$. This completes the proof of Proposition
\ref{yuleprop}.

{\bf Proof of Lemma~\ref{covlem}}. Suppose $f'(y) \geq c>0$ for all $y \in [0,l]$ and that
$Y$ is discrete taking finite values $l \geq y_1 \geq y_2 \geq \cdots \geq y_n \geq 0$ (other cases are similar), and let
$p(y) = \PP(Y = y)$.  Then evaluating the following double sum by expanding out terms gives us the identity:
\begin{equation}
\label{sum1}
\sum_{i,j} (y_i-y_j)(f(y_i)-f(y_j))p(y_i)p(y_j) = 2{\rm Cov}[Y, f(Y)].
\end{equation}
However we can also write this double sum in the form:
\begin{equation}
\label{sum2}
2\sum_{i,j: i>j} (y_i-y_j)(f(y_i)-f(y_j))p(y_i)p(y_j) \geq 2c\sum_{i,j: i>j} (y_i-y_j)^2p(y_i)p(y_j),
\end{equation}
where the inequality holds since, for $y_i \geq y_j$ the condition $f'(y) \geq c$ for all $y \in [0,l]$ implies that
$f(y_i)-f(y_j) \geq c(y_i - y_j)$ by the mean value theorem.  Now,
$$2c\sum_{i,j: i>j} (y_i-y_j)^2p(y_i)p(y_j) = c\sum_{i,j} (y_i-y_j)^2p(y_i)p(y_j) = 2c{\rm Var}[Y].$$
Applying this to Eqns. (\ref{sum1}) and (\ref{sum2}) gives the result claimed.

\end{document}